\documentclass[journal]{IEEEtran}
\usepackage{amsmath,amssymb,amsfonts,bm}
\usepackage{amsthm}
\usepackage{graphicx}
\usepackage{mathtools}
\usepackage{mathrsfs}
\usepackage{tabularx}
\usepackage{bbm}
\usepackage{cite}
\usepackage{booktabs}
\usepackage{array}
\usepackage{multirow}
\usepackage{xcolor}
\usepackage{textcomp}
\usepackage{url}
\usepackage{algorithm}
\usepackage{algpseudocode}
\allowdisplaybreaks

\newcommand{\EnvMap}{\Omega}
\newcommand{\SeviceMap}{\Omega_\text{s}}

\newcommand{\SwarmState}{S}
\newcommand{\UAVPos}{\mathbf p}

\newcommand{\PrimAct}{a}
\newcommand{\ActSet}{\mathbb A}
\newcommand{\RSRP}{R}

\newcommand{\UAVSet}{\mathbb N}
\newcommand{\CommittedUAVSet}{\mathbb N^\mathrm{com}}
\newcommand{\ActiveUAVSet}{\mathbb N^\mathrm{act}}
\newcommand{\NumUAV}{N}

\newcommand{\TransKernel}{P}
\newcommand{\SyncPeriod}{H}
\newcommand{\Msg}{m} 

\newcommand{\EdgeSet}{\mathbb E_l} 

\newcommand{\UAVNeighbor}{\mathbb N^{\mathrm{nbr}}}

\newcommand{\Gain}{G}

\newcommand{\RoundLimit}{R}
\newcommand{\Traj}{\tau} 

\newcommand{\CommittedTraj}{\Traj^\mathrm{com}} 

\newcommand{\CounterfactualTraj}{\Traj^\mathrm{CF}} 

\newcommand{\TrajSet}{\mathbb T} 
\newcommand{\CommittedTrajSet}[1]{\TrajSet^\mathrm{com}_{<#1}} 

\newcommand{\FeasProposalPool}{\TrajSet^\mathrm{feas}}
\newcommand{\LocalProposalPool}{\TrajSet^\mathrm{loc}}
\newcommand{\SubgraphProposalPool}{\TrajSet^\mathrm{sub}}
\newcommand{\GlobalProposalPool}{\TrajSet^\mathrm{global}}

\newcommand{\GraphSet}{\mathbb H_t}
\newcommand{\SubgraphSet}{\UAVSet_l}

\newcommand{\ProposalScore}{\upsilon}
\newcommand{\ElectionChoice}{\chi}

\newcommand{\SelectedUAV}{i^\star}

\newcommand{\NumFeas}{K_\text{feas}} 
\newcommand{\NumCand}{K_\text{c}} 
\newcommand{\RelayDepth}{\mathcal K_\text{dep}} 

\newcommand{\WorldModel}{\mathcal W_{\WorldParams}}
\newcommand{\ElectionModule}{\mathcal A_{\ElectionParams}}

\newcommand{\DeployEncoder}{\mathcal E_{\WorldParams}}
\newcommand{\DeployPredictor}{\mathcal P_{\WorldParams}}
\newcommand{\TargetEncoder}{\mathcal E_{\TargetParams}}
\newcommand{\TargetPredictor}{\mathcal P_{\TargetParams}}

\newcommand{\WorldParams}{\theta}
\newcommand{\TargetParams}{\xi}
\newcommand{\ElectionParams}{\phi}
\newcommand{\CriticParams}{\nu}

\newcommand{\JEPALoss}{\mathcal L_{\mathrm{JEPA}}}
\newcommand{\TargetLoss}{\mathcal L_{\mathrm{target}}}

\newcommand{\TargetReconLoss}{\mathcal L_{\mathrm{rec}}}

\newcommand{\RankLoss}{\mathcal L_{\mathrm{rank}}}
\newcommand{\AlignLoss}{\mathcal L_{\mathrm{align}}}

\newcommand{\CoherenceLoss}{\mathcal L_{\mathrm{coh}}}

\newcommand{\ElectionUpdateLoss}{\mathcal L_{\mathrm{Election}}}

\newcommand{\HuberLoss}{\ell_{\mathrm H}}

\newcommand{\CoherenceTemp}{\tau_{\mathrm{coh}}}

\newcommand{\EndpointDist}[2]{d_{\mathrm{geo}}(#1,#2)}
\newcommand{\LatentDist}[2]{d_{\mathrm z}(#1,#2)}
\newcommand{\LatentEps}{\varepsilon_{\mathrm z}}
\newcommand{\LatentNorm}[1]{\operatorname{norm}(#1)}

\newcommand{\ProposalEndpoint}{\UAVPos}

\newcommand{\ElectionReward}{r}
\newcommand{\ConnPenalty}{\lambda_{\mathrm c}}

\newcommand{\PPOClipLoss}{\mathcal L_{\mathrm{clip}}}
\newcommand{\PPOValueLoss}{\mathcal L_{\mathrm{value}}}
\newcommand{\PPOEntropyBonus}{\mathcal H_{\mathrm{EA}}}
\newcommand{\PPOValueCoef}{c_{\mathrm v}}
\newcommand{\PPOEntropyCoef}{c_{\mathrm e}}
\newcommand{\PPOClipEps}{\epsilon_{\mathrm{clip}}}
\newcommand{\PPOProbRatio}{\rho}
\newcommand{\PPOAdvantage}{A}
\newcommand{\PPOValue}{V}
\newcommand{\PPOReturn}{G}

\newcommand{\FreshRollout}{\mathcal R^{\mathrm{new}}}

\newcommand{\WarmupSteps}{N_{\mathrm{warm}}}
\newcommand{\LearnSteps}{N_{\mathrm{learn}}}

\newcommand{\JointPlan}{\TrajSet}
\newcommand{\BasePlan}{\TrajSet^{\mathrm{base}}}
\newcommand{\SeqPlan}{\TrajSet^\mathrm{seq}}
\newcommand{\ParallelPlan}{\TrajSet^{\parallel}}

\newcommand{\DeployInfo}{\mathcal I}

\newcommand{\WarmupAct}{a^{\mathrm{warm}}} 

\newcommand{\DispatchSep}{\psi^{\mathrm{sep}}}
\newcommand{\DispatchCrowd}{\psi^{\mathrm{crowd}}}
\newcommand{\DispatchDegrade}{\psi^{\mathrm{deg}}}
\newcommand{\DispatchSepWeight}{\lambda_{\mathrm{sep}}}
\newcommand{\DispatchCrowdWeight}{\lambda_{\mathrm{crowd}}}
\newcommand{\DispatchDegradeWeight}{\lambda_{\mathrm{deg}}}
\newcommand{\LocalObs}{o^{\mathrm{loc}}}

\newcommand{\Bandwidth}{B}
\newcommand{\Power}{P}
\newcommand{\RefSignalPower}[1]{\Power_{#1}^{\mathrm{rs}}}
\newcommand{\GroundPower}{\Power}
\newcommand{\SINR}{\gamma}
\newcommand{\Rate}{r}
\newcommand{\Interference}{I}
\newcommand{\RadioField}{Q}
\newcommand{\CoverageThresh}{\rho_{\mathrm{cov}}}
\newcommand{\SINRThresh}{\phi_{\mathrm{SINR}}}
\newcommand{\LinkThresh}{\rho_{\mathrm{link}}}
\newcommand{\BackhaulThresh}{\rho_{\mathrm{bh}}}

\newcommand{\RootSet}{\mathbb N^{\mathrm{root}}}

\newcommand{\Connectivity}{C}
\newcommand{\ConnIndicator}{b}
\newcommand{\MetricVec}{\bm{\mu}}
\newcommand{\MetricWeight}{\mathbf w}
\newcommand{\BalancedScore}{J}

\newcommand{\CFGain}{\Delta^{\mathrm{CF}}}
\newcommand{\CompositionGap}{\mathcal G^{\mathrm{CF}}}

\newcommand{\MainNotationTable}{%
\begin{table}[!t]
\caption{Main notations used in the methodology.}
\label{tab:notation}
\centering
\setlength{\tabcolsep}{3pt}
\renewcommand{\arraystretch}{1.08}
\begin{tabularx}{\columnwidth}{@{}>{\raggedright\arraybackslash}m{0.35\columnwidth}X@{}}
\toprule
Symbol & Meaning \\
\midrule
\multicolumn{2}{@{}l}{\textit{System and deployment information}}\\
\vspace{2pt}
$\UAVSet,\NumUAV$
& UAV set and number of UAVs. \\
$\UAVPos_{i,t}$
& Position of UAV $i$ at deployment step $t$. \\
$\GraphSet,\SubgraphSet$
& A2A communication graph and one subgraph. \\
$\UAVNeighbor_i$
& One-hop neighbor set of UAV $i$. \\
$\LocalObs_i$
& Deployment-visible local observation of UAV $i$. \\
$\RadioField_t$
& Radio field induced by the UAV configuration at step $t$. \\
$\MetricVec,\MetricWeight,\BalancedScore$
& Objective metric vector, metric weights, and balanced objective score. \\
$\JointPlan_t$
& Joint plan executed after synchronization step $t$. \\
\midrule
\multicolumn{2}{@{}l}{\textit{Proposal election}}\\
\vspace{2pt}
$\SyncPeriod,\RoundLimit$
& Synchronization horizon and maximum number of negotiation rounds. \\
$\RelayDepth$
& Maximum relay depth for inter-subgraph proposal exchange. \\
$\NumFeas,\NumCand$
& Number of feasible trajectories and retained candidate proposals. \\
$\Traj_{i,k},\FeasProposalPool$
& Candidate trajectory of UAV $i$ and feasible trajectory set. \\
$\Msg_{j\to i}^{r}$
& Message from UAV $j$ to UAV $i$ at negotiation round $r$; it contains a set of proposals. \\
$\SubgraphProposalPool_r$
& Proposal pool after deterministic reduction inside a subgraph. \\
$\GlobalProposalPool_r$
& Proposal pool after relay-bounded inter-subgraph aggregation. \\
$\ElectionChoice_r$
& Election output at round $r$. \\
$\CommittedTraj_r,\CommittedTrajSet{r}$
& Trajectory committed at round $r$ and context of trajectories committed before round $r$. \\
$\CommittedUAVSet_r,\ActiveUAVSet_r$
& UAVs already committed by round $r$ and UAVs still active at round $r$. \\
\midrule
\multicolumn{2}{@{}l}{\textit{JEPA radio world model and training}}\\
\vspace{2pt}
$\WorldModel,\ElectionModule$
& JEPA radio world model and election module. \\
$\TargetEncoder,\DeployEncoder$
& Target encoder trained from rollout and deployment encoder used before election. \\
$\Delta\MetricVec_{i,k}^{r}$
& Rollout-measured objective metric change for trajectory $\Traj_{i,k}$ at round $r$. \\
$z_{i,k}^{+,r},\hat z_{i,k}^{r}$
& Target latent and deployment-predicted latent. \\
$\ProposalScore_{i,k}$
& Scalar proposal score used by ranking and reduction. \\
$\JEPALoss,\TargetLoss,\AlignLoss$ & JEPA objective, target-side loss,and alignment loss. \\
$\CoherenceLoss,\RankLoss$ & Coherence regularization and auxiliary ranking loss. \\
$\WorldParams,\TargetParams,\ElectionParams,\CriticParams$
& Parameters of the world model, target path, election module, and critic. \\
$\FreshRollout$
& Fresh negotiation rollout collected after a representation update. \\
\bottomrule
\end{tabularx}
\vspace{-8pt}
\end{table}%
}

\theoremstyle{definition}
\newtheorem{definition}{Definition}
\newtheorem{assumption}{Assumption}
\theoremstyle{plain}

\newtheorem{theorem}{Theorem}

\theoremstyle{remark}
\newtheorem*{remark}{Remark}

\title{WONDER: A Radio World Model-based Negotiation Framework for Multi-Agent UAV Coverage Optimization}
\author{
\IEEEauthorblockN{Jiahao Huang, 
Rongpeng Li, 
Zhifeng Zhao, Guoru Ding, 
and Honggang Zhang
\thanks{J. Huang and R. Li are with Zhejiang University, Hangzhou 310027, China,
(email: \{22331083, lirongpeng\}@zju.edu.cn).}\thanks
{Z. Zhao is with Zhejiang Lab, Hangzhou 310012, China, as well as Zhejiang University,
 Hangzhou 310027, China (email: zhaozf@zhejianglab.org).}\thanks{G. Ding is with College of Communications and Engineering, Army Engineering University of PLA, Nanjing 210007, China (e-mail: dr.guoru.ding@ieee.org).
}
\thanks{H. Zhang is with Macau University of Science and Technology, Macau, China (email: hgzhang@must.edu.mo).}
 }
}
\begin{document}
\maketitle

\begin{abstract}
Post-disaster damage to terrestrial infrastructure can disrupt wireless coverage,
while Uncrewed Aerial Vehicle (UAV) swarms provide a promising solution for rapid restoration.
However, due to the limitations in local geometry observations
hidden radio impact,
and inter-UAV communication,
there exists a significant gap between locally visible movement choices and swarm-level coverage outcomes.
To combat this gap,
we propose a raido \underline{W}orld-model-based \underline{O}ptimized \underline{N}egotiation framework for \underline{D}istributed UAV cov\underline{ER}age (WONDER).
Particularly, to tackle the unavailability of the future radio field from onboard observations, 
WONDER uses a Joint-Embedding Predictive Architecture (JEPA)-based radio world model to 
learn and predict the incremental radio effect of each candidate trajectory from deployment-available information.
Multi-round negotiation in WONDER then coordinates ranked proposals by committing one trajectory at a time and re-evaluating the remaining proposals under the updated context. Our theoretical analyses further validate the effectiveness of such a world model-based framework.
WONDER also adopts a Proximal Policy Optimization (PPO)-style Actor and alternates between updating the world model and the actor.
Furthermore,
we build RadioDynamics,
a comprehensive simulation environment that integrates UAV mobility,
radio propagation, inter-UAV communication modeling,
and digital-twin geometry with ray-traced fields in $62$ metropolitan scenes.
Experiments on $11$ testing scenes in RadioDynamics show that WONDER achieves the highest balanced score among seven evaluated methods,
reaching $0.870$ with a $0.162$ coverage advantage over STACCA, while maintaining $100\%$ connectivity between UAVs.
\end{abstract}

\begin{IEEEkeywords}
Multi-agent reinforcement learning,
Multi-hop UAV networks,
Radio World Model.
\end{IEEEkeywords}

\section{Introduction}
\label{sec:introduction}

Post-disaster damage to terrestrial communication infrastructure can disrupt wireless
services across affected areas,
creating an urgent need for rapid coverage restoration~\cite{kwasinski2009telecommunications}.
Correspondingly, Uncrewed Aerial Vehicle (UAV) swarms offer a flexible and resilient solution~\cite{mozaffari2019tutorial,zeng2019accessing,viet2022aerial}, 
by establishing adaptive connectivity graphs to provide communication services for end users~\cite{xu2026scalable,romero2024aerial}. 
Although recent studies have advanced radio-map-assisted UAV deployment~\cite{li2024radio},
learning-based multi-UAV coverage and trajectory control~\cite{wang2026robust},
and coverage--connectivity optimization~\cite{cai2025multiuav},
local yet partial environmental observations
and limited inter-UAV communication could compromise the effectiveness of UAV deployment decisions.
Therefore, a key deployment challenge remains: expanding spatial coverage without breaking inter-UAV connectivity when radio feedback is unavailable.

\subsection{Related Works}
Deployable multi-UAV coverage optimization with maintained inter-UAV connectivity builds on three lines of research.
UAV-enabled wireless coverage and emergency networking examine how mobile aerial access and relay nodes support service restoration.
Multi-Agent Reinforcement Learning (MARL) studies cooperative control with partial observations and limited message exchange.
Radio world models relate propagation,
geometry,
and radio measurements to enhance coverage decisions under unavailable radio feedback.

\begin{table*}[!t]
\centering
\caption{A key summary of the key differences between WONDER and the literature.}
\label{tab:positioning}
\setlength{\tabcolsep}{1.1pt}
\renewcommand{\arraystretch}{1.08}
\newcommand{\tabCircleMark}[1]{%
\raisebox{-0.14ex}{%
\begingroup
\setlength{\unitlength}{0.62em}%
\begin{picture}(1,1)
#1%
\end{picture}%
\endgroup}}
\newcommand{\tabYes}{\tabCircleMark{\put(0.5,0.5){\circle*{1}}}}
\newcommand{\tabNo}{\tabCircleMark{\put(0.5,0.5){\circle{1}}}}
\newcommand{\tabPart}{%
\tabCircleMark{%
\put(0.5,0.5){\circle*{1}}
{\color{white}\put(0.5,0){\rule{0.5\unitlength}{1\unitlength}}}
\put(0.5,0.5){\circle{1}}
}}
\newcommand{\tabRowRule}{\specialrule{0.25pt}{0.6pt}{0.6pt}}
\begin{tabularx}{\textwidth}{@{}>{\raggedright\arraybackslash}m{0.23\textwidth}*{6}{>{\centering\arraybackslash}X}@{}}
\toprule
Representative Studies
& \shortstack{UAV Swarm\\Coverage}
& \shortstack{Radio-Aware\\Evaluation}
& \shortstack{Predictive\\World Model}
& \shortstack{Sparse\\Communication}
& \shortstack{Long-Range\\Coordination}
& \shortstack{Trajectory\\Negotiation} \\
\midrule
UAV access and emergency relay~\cite{zeng2019accessing,mozaffari2019tutorial,viet2022aerial,shamsoshoara2021unvail,tran2022uavrelay,xu2026scalable}
& \tabYes & \tabPart & \tabNo & \tabPart & \tabPart & \tabNo \\
\tabRowRule
Multi-UAV coverage control~\cite{chen2024transformer,cai2025multiuav,xiao2024deep,carvalho2026multi}
& \tabYes & \tabPart & \tabNo & \tabNo & \tabNo & \tabNo \\
\tabRowRule
Radio-aware deployment and radio maps~\cite{romero2024aerial,romero2022radio,li2026lanc}
& \tabPart & \tabYes & \tabPart & \tabNo & \tabNo & \tabNo \\
\tabRowRule
World models and action-sufficient representations~\cite{hafner2023dreamerv3,zhang2025dima,balestriero2025lejepa,assran2025vjepa2,martinez2026coral,hyeon2026asgr}
& \tabNo & \tabNo & \tabYes & \tabPart & \tabNo & \tabNo \\
\tabRowRule
Networked MARL and long-range critics~\cite{yu2022surprising,dgn2020,wen2022mat,sinha2025stacca}
& \tabNo & \tabNo & \tabNo & \tabPart & \tabYes & \tabNo \\
\tabRowRule
Learned communication~\cite{das2019tarmac,ding2024magi,li2024cacom,hu2023etcnet,code2025}
& \tabNo & \tabNo & \tabNo & \tabYes & \tabPart & \tabNo \\
\tabRowRule
\textbf{WONDER (Ours)}
& \textbf{\tabYes} & \textbf{\tabYes} & \textbf{\tabYes} & \textbf{\tabYes} & \textbf{\tabYes} & \textbf{\tabYes} \\
\midrule
\multicolumn{7}{r@{}}{\scriptsize \tabYes~Yes \quad \tabPart~Partial \quad \tabNo~No} \\
\end{tabularx}
\vspace{-16pt}
\end{table*}

\subsubsection{UAV Swarm Coverage and Emergency Networking}

UAV-mounted aerial base stations provide flexible access and relay services when terrestrial wireless infrastructure is unavailable or impaired~\cite{zeng2019accessing,mozaffari2019tutorial,viet2022aerial}.
Extending from individual placement to joint coverage optimization, 
recent studies have examined scalable area coverage~\cite{chen2024transformer},
coverage--connectivity tradeoffs for user equipment~\cite{cai2025multiuav},
dynamic coverage planning in complex environments~\cite{xiao2024deep},
and policy transfer across dynamic scenarios~\cite{carvalho2026multi}.
%
Nevertheless, post-disaster UAV networking couples aerial service provision,
connectivity maintenance,
and mobility control~\cite{shamsoshoara2021unvail,tran2022uavrelay,xu2026scalable}.
In this paper, we focus on coverage restoration under deployable information,
where UAVs make coverage optimization decisions from local observations and limited inter-UAV communication without available radio feedback and compromising inter-UAV connectivity.
%

\subsubsection{Multi-Agent Reinforcement Learning for Networked Coordination}

MARL, which typically leverages centralized training with decentralized execution~\cite{yu2022surprising} to address a partially observable decision-making problem, sounds natural for coverage restoration. Furthermore, the decisions of UAVs are coupled, and the eventual impact of an individual action by one UAV depends on the concurrent choices of its counterparts. 
Therefore, 
graph-based information aggregation~\cite{dgn2020}
and sequential joint-action modeling~\cite{wen2022mat}
%
could be essential. Specifically, inter-agent communication content \cite{das2019tarmac} can be calibrated according to task relevance \cite{ding2024magi},
receiver context \cite{li2024cacom}, timing \cite{code2025}, and network constraints \cite{hu2023etcnet}. On the other hand, to support credit assignment for local decisions, a counterfactual critic, which uses long-range interaction context, has recently been used \cite{sinha2025stacca}. Unfortunately, as shown lately, we find this creates a significant coordination gap between long-range counterfactual evaluation and synchronized parallel joint execution,
motivating a rethinking of how to more meaningfully utilize contextual information to derive coverage optimization-oriented trajectory proposals.

\subsubsection{Radio-Aware Deployment and Radio World Model}

Radio-aware UAV deployment links mobility decisions with propagation-dependent service quality.
Prior studies have used radio maps,
environmental geometry,
and coverage prediction to support aerial base-station placement and low-altitude network planning~\cite{romero2024aerial,romero2022radio,li2026lanc}.
These works show that radio information is essential for coverage-aware mobility,
especially when building blockage and spatial propagation variation affect service quality.
In coverage restoration scenarios, where no radio feedback is available,
predictive representations that relate locally visible geometry and candidate movements to radio impact could benefit subsequent planning and control from partial observations~\cite{hafner2023dreamerv3,zhang2025dima}.
Joint-Embedding
Predictive Architecture (JEPA) methods extend this idea by learning latent representations of future or hidden information,
without explicit reconstruction~\cite{balestriero2025lejepa,assran2025vjepa2}. In other words,
 a learned representation matters more with downstream decisions,
by preserving essential distinctions needed for action selection~\cite{martinez2026coral,hyeon2026asgr}.
Hence, it is meaningful to investigate how to complement MARL with a radio world model that predicts decision-relevant radio impact from deployment-visible information.

\subsection{Motivation and Contributions}
Post-disaster coverage restoration requires relating visible geometry,
candidate UAV movements,
and inter-UAV messages to the coverage and service-quality effects that
are not directly observed at decision time. Correspondingly, it encounters the following two prominent challenges:
\begin{itemize}
    \item How can UAVs infer decision-relevant radio impact from deployment-visible
        geometry
        when each UAV observes only a local part of the scene and exchanges only
        few messages with nearby UAVs?

    \item How can UAVs coordinate and optimize swarm-level trajectories 
        while reducing the gap between counterfactual evaluations and synchronized
        joint execution under intermittent global synchronization~\cite{lauri2017periodic,kantaros2016intermittent}?
\end{itemize}

To answer these questions,
we propose a World-model-based Optimized Negotiation framework for Distributed UAV covERage (WONDER),
a task-oriented proposal negotiation framework for decentralized UAV-swarm coverage.
WONDER forms candidate trajectory proposals and predicts their radio
impact using a radio world model, learned through a JEPA, from deployment-available
information. 
Through multi-round negotiation,
an election actor coordinates ranked trajectory proposals under limited inter-UAV
communication and re-evaluates the remaining proposals after each negotiated
commitment.
Besides summarizing the key differences with the literature in Table \ref{tab:positioning}, the main contributions of this paper are summarized as follows:

\begin{itemize}
    \item We develop WONDER,
          a task-oriented proposal negotiation framework for decentralized UAV-swarm coverage optimization.
          WONDER uses a JEPA-based radio world model to capably predict each candidate trajectory's incremental radio impact under the evolving negotiation context.
          Thus, it enables a Proximal Policy Optimization (PPO)-based trajectory decision even when direct radio measurement is unavailable.
          Furthermore, through alternate updates between the world model and PPO actor, WONDER ensures the PPO actor is trained on proposal pools induced by the latest world model, 
          reducing the mismatch between proposal generation and election learning.

    \item For WONDER, we provide an empirical and theoretical analysis of multi-round negotiation. 
        The analysis identifies the counterfactual composition gap produced by synchronized parallel execution of individually evaluated proposals. 
        It shows that the induced sequential plan instead admits a recursive lower bound over the parallel counterfactual plan. 
        With proposal coverage and bounded election regret, this bound becomes attainable by WONDER through its generated proposal pool and PPO-based election actor.

    \item As shown in Fig. \ref{fig:real}, towards evaluating deployable UAV-swarm coverage optimization, 
        we build RadioDynamics, a simulation environment that integrates UAV mobility, radio propagation and inter-UAV communication modeling, and digital-twin geometry with ray-traced fields. 
        RadioDynamics constructs $62$ radio scenes, each corresponding to a \(700\,\mathrm{m}\times700\,\mathrm{m}\) central urban district in metropolises such as Hong Kong, New York, and Tokyo. 
        The simulated radio fields are further grounded in realistic three-dimensional building geometry and urban propagation structure.
        The closed-loop experiments on $11$ testing scenes in RadioDynamics validate the superiority of WONDER over STACCA~\cite{sinha2025stacca}.
\end{itemize}
\begin{figure*}[!t]
\centering
\includegraphics[width=0.95\textwidth]{figure/Fig1_Scene_rev.pdf}
\vspace{-2pt}
\caption{RadioDynamics setting for deployable UAV-swarm coverage.
The upper flow constructs the radio environment from digital-twin geometry,
UAV mobility,
and ray-traced Reference Signal Received Power (RSRP) fields.
The lower flow contrasts deployment-visible local geometry and constrained
multi-hop communication with a full-information reference view.
The marked \textcolor{red}{asymmetries} highlight the gaps between visible geometry and hidden radio impact,
and between sparse local communication and swarm-level coverage objectives.}
\label{fig:real}
\vspace{-8pt}
\end{figure*}

\subsection{Paper Organization}
The remainder of this paper is organized as follows.
Section~\ref{sec:system_model} presents the system model and formulates the problem.
Section~\ref{sec:empirical_analysis} diagnoses the counterfactual composition gap in synchronized counterfactual deployment and motivates multi-round negotiation. 
Section~\ref{sec:wonder_method} introduces the WONDER methodology, while 
Section~\ref{sec:theoretical_analysis} proves that sequential commitment admits a recursive advantage over the parallel synchronized counterfactual plan. 
Section~\ref{sec:experiments} reports the experimental evaluation,
and Section~\ref{sec:conclusion} concludes the paper.
\section{System Model and Problem Formulation}
\label{sec:system_model}
Beforehand, the main notations used are summarized in Table \ref{tab:notation}.
Particularly, calligraphic letters denote neural networks, objectives, and loss functions; 
uppercase letters denote constants; 
lowercase letters denote variables; and Greek letters denote parameters.

\MainNotationTable
\vspace{-15pt}
\subsection{System Model}
\label{subsec:system_model}
We consider a finite-horizon UAV-swarm coverage problem for post-disaster service restoration, where UAV positions induce the Air-to-Ground (A2G) service field and Air-to-Air (A2A) communication graph.
Let $\UAVSet=\{1,\ldots,\NumUAV\}$ denote the UAV set, and let $\SeviceMap$ be the sampled service region specified by the digital-twin layout $\EnvMap$ (e.g., boundary between buildings and service ground, and building height maps).
At deployment step $t$, UAV $i\in\UAVSet$ has position $\UAVPos_{i,t}\in\mathbb R^3$.
The joint UAV configuration induces the radio field $\RadioField_t$ and $L_t$ A2A disjoint subgraph sets $\GraphSet = \Sigma_{l=1}^{L_t}\{\SubgraphSet, \EdgeSet\}$, 
yielding the global state
\begin{equation}
\SwarmState_t
=
\left(
\{\UAVPos_{i,t}\}_{i\in\UAVSet},
\GraphSet,
\RadioField_t
\right).
\label{eq:physical_state_decomposition}
\end{equation}
Subsequently, each UAV $i$ acts on local observations $\LocalObs_i$ and coordination messages $\{\Msg\}_{j\in \GraphSet}$ exchanged over the A2A graph. 
The following models specify the radio service metrics, communication topology, and decentralized decision boundary.

\subsubsection{Communication Model}
\label{subsec:communication_coverage}
At step $t$, for each ground position $x\in\SeviceMap$, 
geometry-based ray tracing on $\EnvMap$ gives the aggregate A2G power gain $\Gain^{\mathrm{A2G}}(x)$ from UAV $i$ to $x$~\cite{sionnaRT2025}.
The received reference power is
$
\GroundPower_{i,t}(x)
=
\RefSignalPower{i}
\Gain_{i,t}^{\mathrm{A2G}}(x)$,
where $\RefSignalPower{i}$ denotes the reference-signal transmit power of UAV $i$.
The serving UAV is selected by $i_t^\star(x)=\arg\max_{i\in\UAVSet}\GroundPower_{i,t}(x)$, 
and the resulting field $\GroundPower_{i,t}$ is used to define the ground-service metrics below. The A2G ground-service field is summarized by 
signal strength $\RSRP_t(x)$, link quality $\SINR_t(x)$, and rate $\Rate_t(x)$:
\begin{equation}
\begin{aligned}
\RSRP_t(x)
&=
10\log_{10}\!\left(
\frac{\GroundPower_{i_t^\star(x),t}(x)}{1\,\mathrm{mW}}
\right),\\
\SINR_t(x)
&=
\frac{\GroundPower_{i_t^\star(x),t}(x)}
{\sigma_x^2+\Interference_t^{\mathrm{A2G}}(x)},\\
\Rate_t(x)
&=
\Bandwidth^{\mathrm{A2G}}
\log_2\!\left(1+\SINR_t(x)\right).
\end{aligned}
\label{eq:ground_service_metrics}
\end{equation}
Here, $\RSRP_t(x)$ follows the RSRP convention in~\cite{3gpp38215}, $\sigma_x^2$ is the receiver noise power, 
$\Interference_t^{\mathrm{A2G}}(x)$ is the aggregate co-channel interference from non-serving UAVs on the reused A2G downlink resource, and $\Bandwidth^{\mathrm{A2G}}$ is the allocated A2G bandwidth.

The A2A graph $(\SubgraphSet,\EdgeSet)$ specifies one-hop message exchange between UAV $i\in \SubgraphSet$ and its neighbors 
$\UAVNeighbor_i=\{j:(i,j)\in\EdgeSet, i,j \in \SubgraphSet \}$.
For distinct UAVs $i$ and $j$, the layout $\EnvMap$ determines the distance 
$d_{ij,t}=\|\UAVPos_{i,t}-\UAVPos_{j,t}\|_2$.
Consistent with the Line of Sight (LoS) and Non-LoS (NLoS) path-loss functions in~\cite{etsi3gpp38901}, the A2A path loss is
\begin{equation}
\begin{aligned}
\mathrm{PL}_{ij,t}^{\mathrm{A2A}}
&=\ell_{ij,t}\mathrm{PL}^{\mathrm L}\!\left(d_{ij,t},f_c\right)
+\left(1-\ell_{ij,t}\right)\mathrm{PL}^{\mathrm N}\!\left(d_{ij,t},f_c;h_{\mathrm{eff}}\right),
\end{aligned}
\label{eq:a2a_geometry_pathloss}
\end{equation}
where the LoS indicator $\ell_{ij,t}
=\operatorname{LOS}(\UAVPos_{i,t},\UAVPos_{j,t};\EnvMap)$, 
$f_c$ is the carrier frequency, and $h_{\mathrm{eff}}$ is the effective UAV antenna height.
With antenna gains $G_i$ and $G_j$, the directional received power is\footnote{Different from the A2G path, in the A2A and backhaul link budgets, transmit and received powers and link thresholds are expressed in dBm,
while antenna gains and path losses are expressed in dB.}
\[
P_{i\to j,t}^{\mathrm{rx}} = P_i^{\mathrm{A2A}} +G_i+G_j -\mathrm{PL}_{ij,t}^{\mathrm{A2A}}.
\]
Because coordination requires bidirectional exchange, $(i,j)$ is included in $\EdgeSet$ only if both $P_{i\to j,t}^{\mathrm{rx}}$ and $P_{j\to i,t}
^{\mathrm{rx}}$ exceed a threshold $\LinkThresh$.

Backhaul access is provided by a fixed boundary gateway $e$. 
We collect the UAVs with direct backhaul access into the root set $\RootSet_t$. 
Under the backhaul path loss $\mathrm{PL}_{i,t}^{\mathrm{bh}}$ in~\cite{xu2026scalable},
the direct-access root set is
\begin{equation}
\begin{aligned}
\RootSet_{t}
=\{i\in\UAVSet:\;&
P_e^{\mathrm{bh}}
+G_e+G_i -\mathrm{PL}_{i,t}^{\mathrm{bh}}
\geq\BackhaulThresh\}.
\end{aligned}
\label{eq:backhaul_root_set}
\end{equation}
The UAVs outside $\RootSet_{t}$ obtain core-network access through multi-hop paths in $\SubgraphSet$ to this root set.
A UAV is gateway-connected if it has a multi-hop A2A path to at least one root in $\RootSet_{t}$, denoted as $\ConnIndicator_i(\SwarmState_t;e) = 1$; otherwise, $\ConnIndicator_i(\SwarmState_t;e) = 0$.
The gateway-connectivity level is
\begin{equation}
\Connectivity(\SwarmState_t;e)
=
\frac{1}{|\UAVSet|}
\sum_{i\in\UAVSet}\ConnIndicator_i(\SwarmState_t;e).
\label{eq:connectivity_metric}
\end{equation}
At synchronization states, the hard gateway-connectivity requirement is $\Connectivity(\SwarmState_t;e)=1$.

\subsubsection{Decision Model}

At each step $t$, UAV $i$ constructs a deployment-visible observation $\LocalObs_{i,t}\in\mathbb R^{160}$ via one-hop aggragetion, 
including: 
(i) the shared digital-twin geometry $\EnvMap$; 
(ii) the UAV position $\UAVPos_{i,t}$; 
(iii) the one-hop communication topology induced by $\EdgeSet$;
(iv) the relative positions of one-hop neighbors $\{\UAVPos_{j,t}: j \in
\UAVNeighbor_i\}$; 
(v) the geometry around the one-hop neigbors $\{\EnvMap^j: j\in \UAVNeighbor_i \}$\footnote{Each group is encoded into a $32$-dimensional representation and concatenated to form the $160$-dimensional input, i.e., $5 \times 32 = 160$.}.
The global radio field $\RadioField_t \in \mathbb R^{\SeviceMap}$ and non-neighbor states are unavailable to $\pi$. 
Every $\SyncPeriod$ steps, a synchronization step $t$ invokes $\RoundLimit$ rounds of communications, 
%
and UAV $i$ then receives proposal messages $\Msg_{j\to i,t}$ from neighbors $j \in \UAVNeighbor_i$. 
Each message from $j$ to $i$ encompasses the top-$\NumCand$ proposals of trajectories $\Traj_{j, k}$:
\begin{equation}
\Msg_{j\to i,t}
=
\left\{
\left(
j,
\Traj_{j,k},
z_{j,k},
\ProposalScore_{j,k}
\right)
\right\}_{k=1}^{\NumCand},
\label{eq:proposal_packet}
\end{equation}
where ${z}_{j,k}$ indicates a latent representation of a movement trajectory $\Traj_{j, k}$ in terms of the expected global metric and $\ProposalScore_{j,k}$ corresponds to the related evaluation score. 

UAVs then make joint plan $\JointPlan_t = \{\PrimAct_{i,t}\}^{t:t+\SyncPeriod}_{i \in \UAVSet}$ for next $\SyncPeriod$ steps via deployment-available information $\DeployInfo_{i,t}$: 
\begin{equation}
\DeployInfo_{i,t}
\triangleq
\left(
\LocalObs_i,
\{m_{j\to i,t}\}_{j\in\UAVNeighbor_i}
\right).
\label{eq:decentralized_deployment_information}
\end{equation}
which induces
\begin{equation}
\SwarmState_{t+\SyncPeriod}
\sim
\TransKernel\!
\left(\cdot\mid\SwarmState_t,\JointPlan_t;\EnvMap\right).
\label{eq:state_transition}
\end{equation}

\subsection{Problem Formulation}
\label{subsec:problem_formulation}

Each restoration process consists of a warm-up prefix followed by a decentralized policy $\pi$. 
During the first $\WarmupSteps$ steps, the swarm executes a heuristic flocking policy \cite{atincc2020swarm, wu2021multi, capelli2020connectivity, lin2021online}.
%
Particularly, as shown in Fig. \ref{fig:evolution_start}, a swarm of UAVs gradually disperses 
while maintaining inter-UAV connectivity.
At each step $t$ in this prefix, UAV $i$ selects
\begin{equation}
\begin{aligned}
\WarmupAct_{i,t}
&=
\operatorname*{arg\,max}_{a\in\ActSet_i(\SwarmState_t)}
\Bigl[
\DispatchSepWeight\DispatchSep_{i,t}(a)
-
\DispatchCrowdWeight\DispatchCrowd_{i,t}(a) \\ 
&\hspace{6.3em}
-
\DispatchDegradeWeight\DispatchDegrade_{i,t}(a)
\Bigr],
\end{aligned}
\label{eq:warmup_dispatch}
\end{equation}
where $\ActSet_i(\SwarmState_t)$ is a five-dimensional discrete action space, consisting of four cardinal movements and a hovering action, 
$\DispatchSepWeight$, $\DispatchCrowdWeight$ and $\DispatchDegradeWeight$ denote the weight factors, while $\DispatchSep_{i,t}$, $\DispatchCrowd_{i,t}$ and $\DispatchDegrade$ denote the action-related separation reward, crowding penalty and connectivity-degradation penalty, respectively\cite{atincc2020swarm, wu2021multi, capelli2020connectivity, lin2021online}.
%
Although this heuristic quickly disperses the UAVs, as evidenced in Fig. \ref{fig:evolution} lately, the resulting warm-up trajectory still leaves substantial room for optimization due to the lack of radio-aware gain assessment.  

\begin{figure}[t]
\centering
\includegraphics[width=0.8\linewidth]{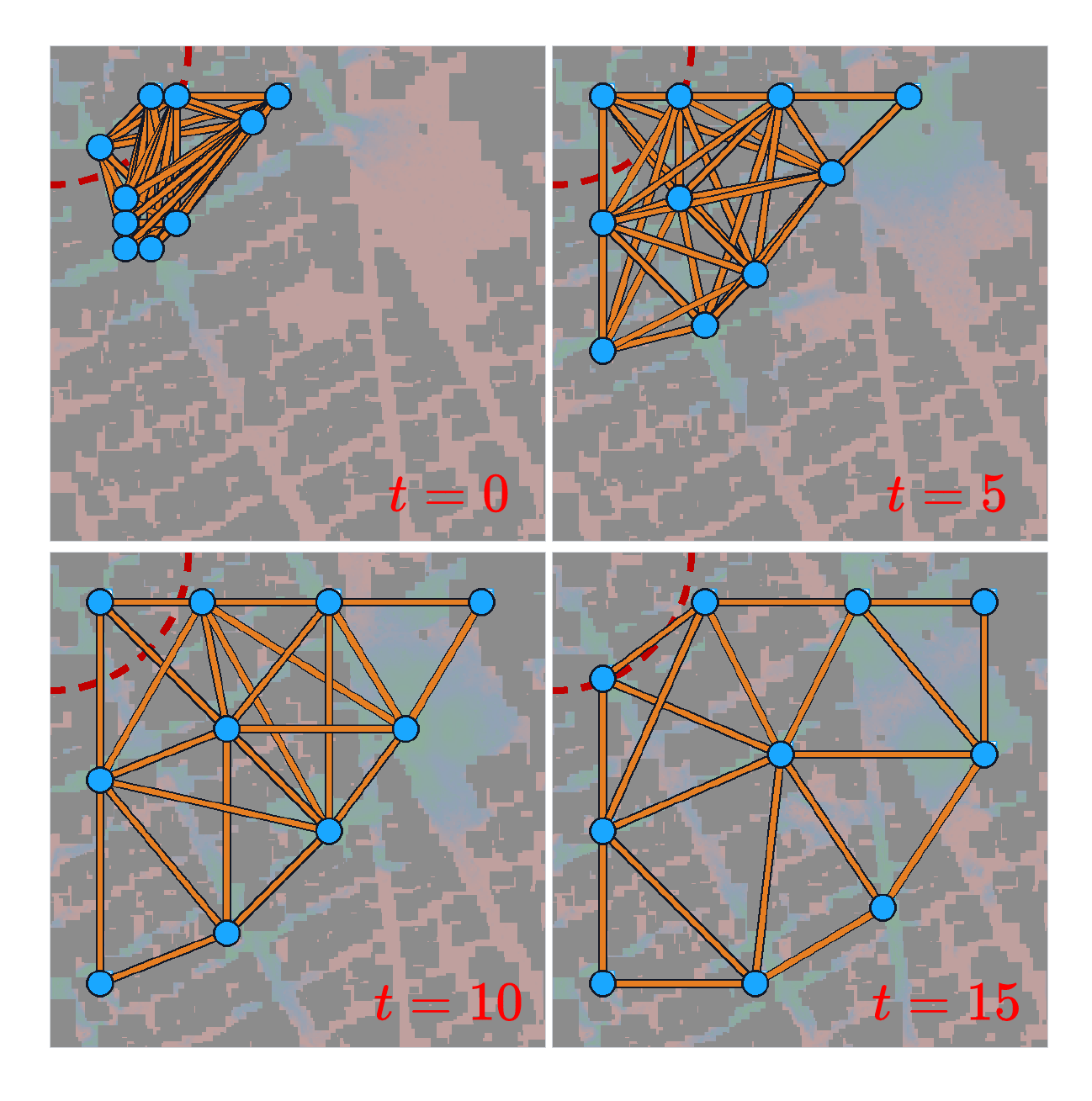}
\vspace{-15pt}
\caption{Visualization of flocking warm-up prefix, with $\WarmupSteps = 15$. 
In each panel, \textcolor{blue}{blue} markers denote UAV positions, \textcolor{orange}{orange} segments denote feasible inter-UAV communication links, 
and the \textcolor{red}{red} circle in the upper left denotes the terminal backhaul range; UAVs outside this range depend on multi-hop UAV relaying for backhaul connectivity.
\textcolor{gray}{dark-gray} regions indicate building footprints, 
and the \textcolor{red}{red}--\textcolor{green}{green} heat map encodes RSRP strength.
}
\vspace{-15pt}
\label{fig:evolution_start}
\end{figure}
Contingent on the heuristic initialization, a decentralized policy $\pi$, which follows the deployment-available information in \eqref{eq:decentralized_deployment_information}, then runs for $\LearnSteps$ synchronization rounds.  
Let ${\SwarmState_\pi}$ denote the reported state obtained after accomplishing the complete restoration process, 
and the metric vector $\MetricVec({\SwarmState_\pi};e)=[\mu_1,\ldots,\mu_7]^{\top}$ denotes the seven evaluation metrics $\pi$ induced by $\pi$. 
The connectivity-preserved coverage optimization $J$ objective can be written as
\begin{align}
\max_{\pi}&\quad
J = \mathbb E_{\pi}\!\left[\MetricWeight^{\top}
\MetricVec\!\left({\SwarmState_\pi};e\right)\right] \nonumber\\
\text{s.t.,  } &
\Connectivity({\SwarmState_\pi};e)\equiv 1,
\forall t
\label{eq:constrained_deployment_problem}\\
& \mu_1=\left\langle
\mathbbm 1[{\RSRP_\pi}\geq\CoverageThresh]
\right\rangle_{\SeviceMap}
\quad
\mu_2=\left\langle
10\log_{10}{\SINR_\pi}
\right\rangle_{\SeviceMap},
\nonumber\\
& \mu_3=\left\langle
\mathbbm 1[10\log_{10}{\SINR_\pi}\geq\SINRThresh]
\right\rangle_{\SeviceMap}
\quad
\mu_4=\left\langle
{\Rate_\pi}
\right\rangle_{\SeviceMap},
\nonumber\\
& \mu_5=
\frac{\mu_4^2}{\left\langle{\Rate_\pi}^{\,2}\right\rangle_{\SeviceMap}}
\,\,
\mu_6=\Connectivity({\SwarmState_\pi};e)
\,\, 
\mu_7=\left\langle
{\Interference_\pi}^{\mathrm{A2G}}
\right\rangle_{\SeviceMap}.\nonumber
\end{align} 
where $\MetricWeight$ sets the evaluation weights and 
the operator $\langle f\rangle_{\SeviceMap}\triangleq |\SeviceMap|^{-1}\sum_{x\in\SeviceMap} f(x)$. This paper aims to find an appropriate policy $\pi$ that maximizes the objective in \eqref{eq:constrained_deployment_problem}. 
\section{Empirical Analysis}
\label{sec:empirical_analysis}

We first evaluate the feasibility of STACCA~\cite{sinha2025stacca},
a recent and representative CTDE counterfactual method,
in the RadioDynamics scenario.
The method uses shared context to evaluate each local decision while holding the other agents' decisions fixed.
Its centralized counterfactual critic strengthens this branch-wise evaluation by amplifying the credit-assignment signal available to decentralized actors.
As shown in Fig.~\ref{fig:STACCA_curve},
this baseline exhibits oscillatory dynamics in actor optimization and return on the RadioDynamics UAV coverage task,
suggesting a mismatch between counterfactual branch evaluation and synchronized long-horizon deployment. We adopt the following definition to analyze the reasons.

\begin{figure}[!tb]
\centering
\includegraphics[width=0.98\linewidth]{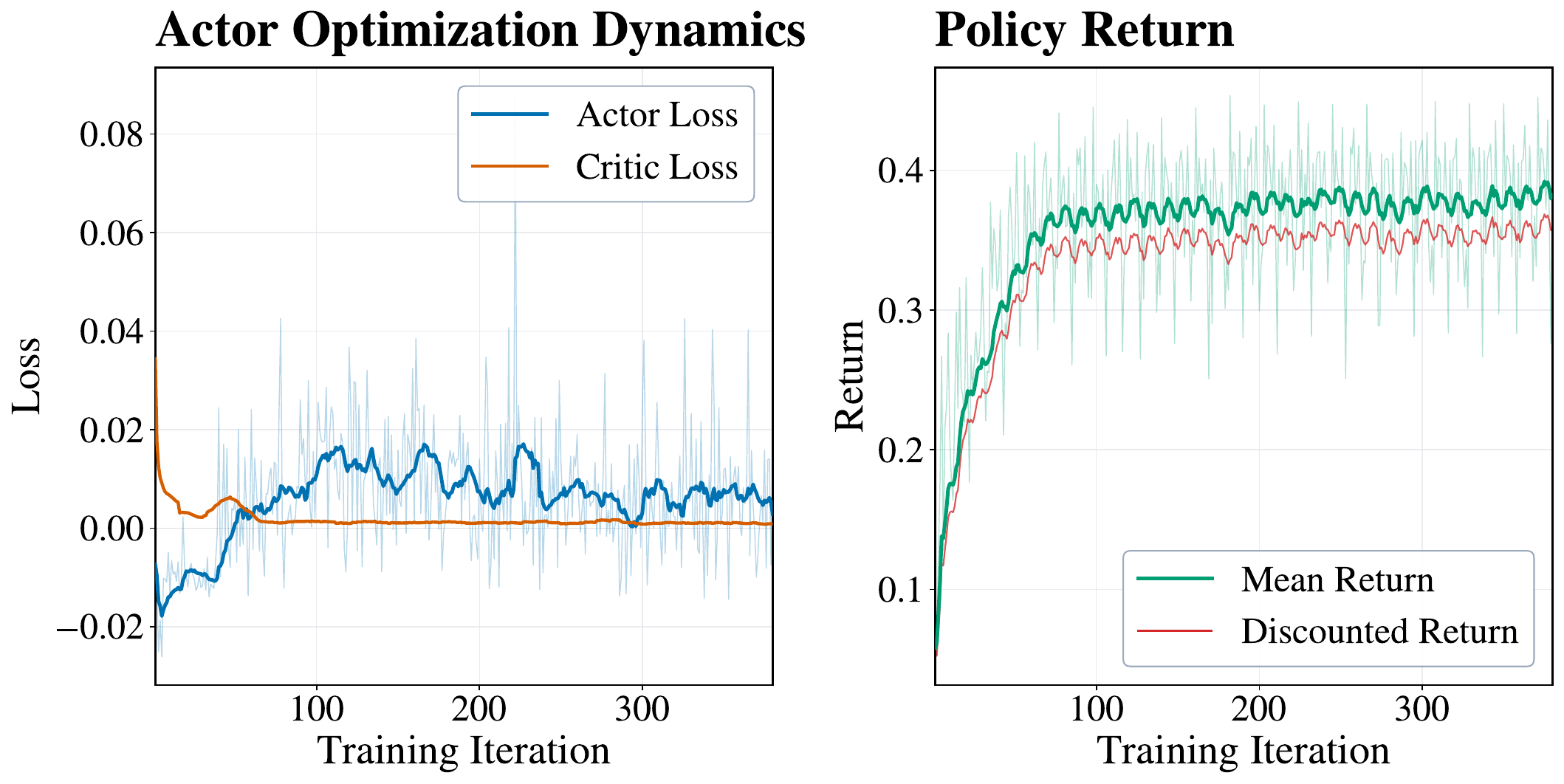}
\vspace{-8pt}
\caption{Loss curve and return curve of STACCA in RadioDynamics.}
\label{fig:STACCA_curve}
\end{figure}

\begin{figure}[!tb]
\centering
\includegraphics[width=0.98\linewidth]{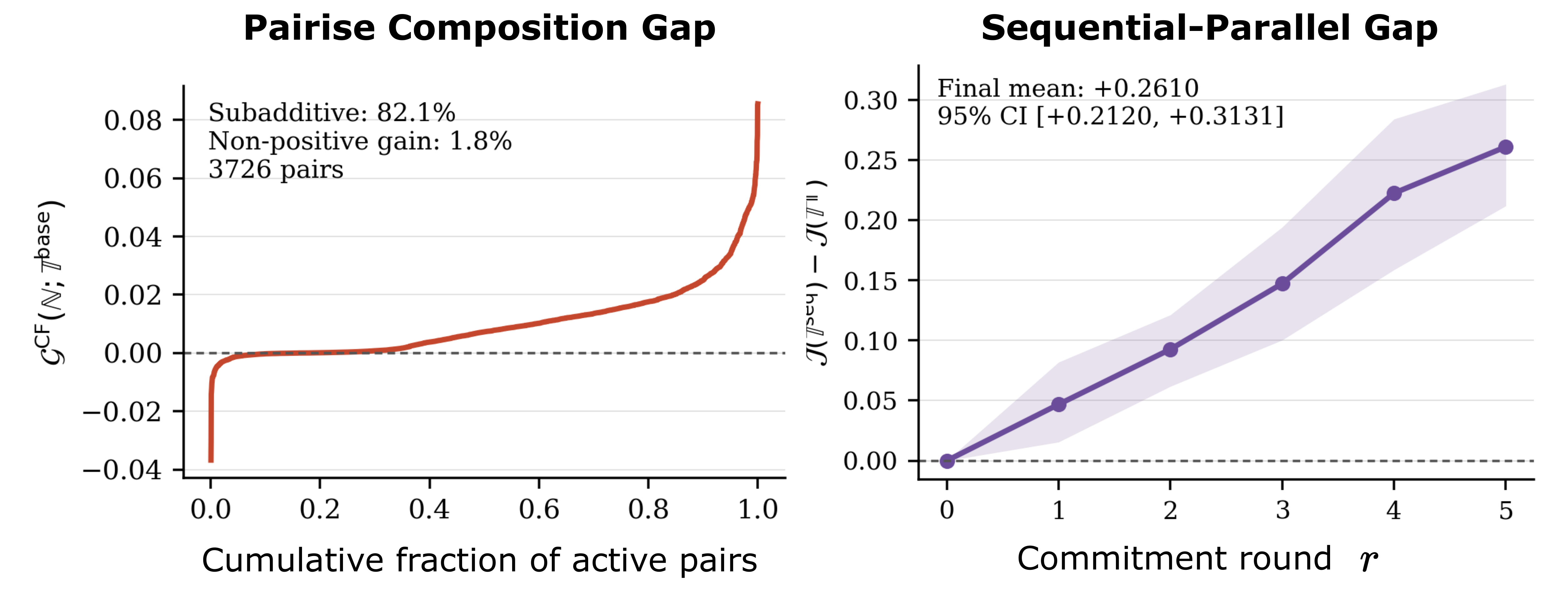}
\vspace{-8pt}
\caption{Empirical evidence for counterfactual composition gap and sequential-plan gains in RadioDynamics. 
Both experiments are implemented across \(51\) train and validation scenes. 
Left: pairwise \(\CompositionGap(\UAVSet;\BasePlan)\) measured from \(3,726\) active UAV pairs, comparing synchronized two-UAV counterfactual estimates with the impact of parallel execution. 
Right: the incremental balanced score of \(\SeqPlan_r\) over \(\ParallelPlan\) increases with the negotiation round \(r\); the shaded region denotes the \(95\%\) confidence interval.}
\vspace{-15pt}
\label{fig:residual_curve}
\end{figure}

\begin{definition}[Counterfactual composition gap]
\label{def:counterfactual_composition_gap}
For each UAV \(i\) with reference plan $\BasePlan$, the counterfactual trajectory selected from its feasible set $\FeasProposalPool_r$ is
\[
\CounterfactualTraj_{i}
\in
\arg\max_{\Traj_{i,k}\in\FeasProposalPool_r}
\CFGain(\Traj_{i,k};\BasePlan).
\]
This selected trajectory is the replacement trajectory used to update the reference plan in the following parallel or
sequential construction.
The composed parallel plan over this set is
\begin{equation}
\ParallelPlan
=
\BasePlan\oplus((i,\CounterfactualTraj_i))_{i\in\UAVSet}.
\label{eq:ParallelGain}
\end{equation}
The counterfactual composition gap is
\begin{equation}
\CompositionGap(\UAVSet;\BasePlan)
=
\sum_{i\in\UAVSet}
\CFGain(\CounterfactualTraj_i;\BasePlan)
-
\left[
\BalancedScore(\ParallelPlan)
-
\BalancedScore(\BasePlan)
\right].
\label{eq:CompositionGap}
\end{equation}
A positive \(\CompositionGap(\UAVSet;\BasePlan)\) means that individually improving counterfactual trajectories
overestimates the gain realized when those trajectories are executed together.
\end{definition}

To narrow the composition gap $\CompositionGap$, 
we instantiate sequential plan $\SeqPlan$ with the same counterfactual evaluation as $\ParallelPlan$, but commits one trajectory per round:
\begin{align}
\CommittedTraj_r
&=
\operatorname{argmax}_{\CounterfactualTraj_i,\,i\in\UAVSet\setminus\CommittedUAVSet_r}
\CFGain(\CounterfactualTraj_i;\BasePlan\oplus\CommittedTrajSet{r}).
\nonumber
\end{align}
After \(r\) rounds, the resulting plan is
\begin{align}
\SeqPlan_{r}
&=
\CommittedTrajSet{r}
\oplus((i,\CounterfactualTraj_i))_{i\in\mathbb N^{\mathrm{CF}}_{r}}.
\quad
\SeqPlan_0=\ParallelPlan.
\nonumber
\end{align}

Fig.~\ref{fig:residual_curve} provides the empirical basis for the subsequent implementation and analysis. 
The left panel shows that synchronized counterfactual overestimates parallel execution: \(82.1\%\) of the \(3,726\) active UAV pairs have \(\CompositionGap(\UAVSet;\BasePlan) > 0\), with a mean gap of \(1.00\%\) and a median gap of \(0.73\%\).
The right panel show that \(\BalancedScore(\SeqPlan_r)-\BalancedScore(\ParallelPlan)>0\) for every \(r\ge1\) and increases from \(0.047\) at \(r=1\) to \(0.261\) at \(r=5\), with all \(95\%\) confidence intervals remaining positive.
These observations motivate WONDER's multi-round negotiation, which commits one trajectory per round and recomputes the remaining proposals under the updated context. 
Sec.~\ref{sec:theoretical_analysis} formalizes this design by showing that \(\SeqPlan_r\) admits a larger advantage than \(\ParallelPlan\) as \(r\) increases. 

\begin{figure*}[!htb]
\centering
\includegraphics[width=0.98\textwidth]{figure/module-v3.pdf}
\vspace{-12pt}
\caption{Architecture of WONDER. The upper panel shows the deployment-time inference pipeline at a synchronization time step, 
including candidate trajectory generation, one-hop aggregation, subgraph reduction, multi-hop relay, global election, and trajectory commitment. 
The lower panel shows the three-stage training pipeline for the JEPA world model and election actor. 
\textcolor{orange}{Orange} blocks denote modules used at deployment, and \textcolor{yellow}{yellow} blocks denote modules used only during training.}
\label{fig:module}
\end{figure*}
\section{Methodology}
\label{sec:wonder_method}

\subsection{Overview}
\label{subsec:wonder_overview}
At each synchronization step $t$, 
WONDER constructs a joint plan $\JointPlan_t$ through multiple commitment rounds of trajectory election, 
where each round commits one elected trajectory and action selection is performed only after the plan is finalized.
Overall, as shown in Fig.~\ref{fig:module}, the candiate trajectories 
undergo a reduction within each subgraph $\SubgraphSet$, 
a multi-hop relay, global election to derive a final commitment.  
Based on available information $\DeployInfo_{i,t}$, each UAV filters endpoints reachable within $\SyncPeriod$ steps under the Manhattan distance and constructs non-overlapping feasible trajectories $\Traj_{i, k} \in \FeasProposalPool_{i,r}$ to the retained endpoints. 
From the perspective of UAV $i$, as specified in \eqref{eq:proposal_packet}, the message $\Msg_{j\to i}^{r}$ received from UAV $j$ contains a set of trajectories.
At round $r$, the already committed UAVs form the set $\CommittedUAVSet_r$, whose trajectories form $\CommittedTrajSet{r}$, and the remaining UAVs form the set $\ActiveUAVSet_r$. 
The JEPA radio world model $\WorldModel$, the details of which will be discussed in Section \ref{subsec:radio_aware_world_model}, ranks these trajectories and retains the top $\NumCand$ candidate trajectories $\LocalProposalPool_{i,r}$.
For UAV $i$, neighbor messages $\{\Msg_{j\to i}^{r}:j\in\UAVNeighbor_i\}$ collect the trajectories available inside the same $\SubgraphSet$.
Each $\SubgraphSet$ applies a deterministic $\mathsf{Top}_{\NumCand}$ reduction to these messages and 
forms the subgraph trajectory pool $\SubgraphProposalPool_r$.
For a relay UAV $i^{\mathrm{rel}}\in\UAVSet_l$, 
inter-subgraph relay makes the messages $\{\Msg_{j\to i^{\mathrm{rel}}}^{r}:j\in\UAVSet_{l'},\,l'\neq l\}$ available within at most $\RelayDepth$ relay hops.
This relay step forms the global trajectory pool $\GlobalProposalPool_r$ with $\NumCand$ trajectories for the PPO-based election actor $\ElectionModule$.
The elected trajectory determines the committed trajectory $\CommittedTraj_r$, which is appended to $\CommittedTrajSet{r+1}$. 
After $R$ negotiation rounds, the joint plan is assembled as $ \JointPlan_t=\CommittedTrajSet{R}\cup\{\LocalProposalPool_{i,r} : i\notin\CommittedUAVSet_R\}$, combining the committed trajectories with the local proposals of UAVs that remain uncommitted.

As illustrated in Fig.~\ref{fig:module}, the message-driven trajectory chain in round $r$ is summarized as
%
\begin{align}
&\FeasProposalPool_{i,r}, \CommittedTrajSet{r}
\xrightarrow{\ \WorldModel}
\LocalProposalPool_{i,r}
\xrightarrow{\ \{\Msg_{j\to i}^{r}\}_{j\in\UAVNeighbor_i}}
\SubgraphProposalPool_{l,r} \nonumber\\
&\xrightarrow[
\RelayDepth
]{\{\Msg_{j\to i^{\mathrm{rel}}}^{r}:j\in\UAVSet_{l'},\,l'\neq l\}}
\GlobalProposalPool_{r}
\xrightarrow{\ \ElectionModule\ }
\CommittedTraj_{r} 
\xrightarrow{}
\CommittedTrajSet{r+1}\label{eq:proposal_chain}
\end{align}


\subsection{Radio World Model}
\label{subsec:radio_aware_world_model}
Under the deployment information $\DeployInfo_{i,t}$, the future radio field $\RadioField_t$ and non-neighbor states are unavailable, making direct evaluation of each trajectory infeasible at deployment. 
WONDER therefore uses a radio world model $\WorldModel$ to estimate the decision-relevant radio impact of each feasible trajectory. 

As shown in Fig.~\ref{fig:module}, $\WorldModel$ contains a deployment encoder $\DeployEncoder$, a metric-change prediction head $\DeployPredictor$, and a rollout-supervised target encoder $\TargetEncoder$.
The prediction is conditioned on the evolving negotiation context. 
At round $r$, let $\JointPlan_t^{r}$ denote the partial joint plan formed by the trajectories in $\CommittedTrajSet{r}$. 
Each new commitment $\CommittedTraj_r$ updates this context for the remaining candidates. 
Across negotiation rounds, $\WorldModel$ therefore predicts each trajectory's $\Traj_{i,k}^{r}$ incremental impact 
relative to the trajectories 
already committed.
%
%
The scalar trajectory score $\ProposalScore_{i,k}$ used by local, subgraph, and relay reductions is then computed from the predicted metric change.
Here, a metric $\MetricVec$ can be a composition of coverage area, Jain fairness, and connectivity mentioned in Eq.~\eqref{eq:constrained_deployment_problem}. 
%
%
Formally, the rollout target is computed as: 
\begin{equation}
\begin{aligned}
\Delta\MetricVec_{i,k}^{r}
=
\MetricVec\!\left(
\SwarmState_{t+\SyncPeriod}; e
\right)\Big|_{\JointPlan_t^{r}\oplus(i,\Traj_{i,k})}
-
\MetricVec\!\left(
\SwarmState_{t+\SyncPeriod};e
\right)\Big|_{\JointPlan_t^{r}} .
\end{aligned}
\label{eq:target_metric_change}
\end{equation} 

The target encoder $\TargetEncoder$ maps the rollout-measured metric change $
\Delta\MetricVec_{i,k}^{r}$ to the target latent $z_{i,k}^{+,r}$, and the target
predictor $\TargetPredictor$ reconstructs this change as $
\widehat{\Delta\MetricVec}_{i,k}^{+,r}$:
\begin{equation}
z_{i,k}^{+,r}
=
\TargetEncoder\!\left(
\Delta\MetricVec_{i,k}^{r}
\right), \ \widehat{\Delta\MetricVec}_{i,k}^{+, r}
=
\TargetPredictor\!\left(
z_{i,k}^{+,r}
\right). 
\label{eq:jepa_world_model}
\end{equation}
The deployment encoder $\DeployEncoder$ maps the deployment-visible information $
\DeployInfo_{i,t}$ to the deployment latent $z_{i,k}^{r}$, from which the
deployment predictor estimates the metric change $\widehat{\Delta\MetricVec}_{i,k}
^{r}$:
\begin{equation}
z_{i,k}^{r}
=
\DeployEncoder\!\left(
\DeployInfo_{i,t},\Traj_{i,k},\CommittedTrajSet{r}
\right), \ \widehat{\Delta\MetricVec}_{i,k}^{r}
=
\DeployPredictor\!\left(
z_{i,k}^{r}
\right). 
\end{equation}
First, both paths are supervised by the rollout-measured metric change $\Delta\MetricVec_{i,k}^{r}$.
The target path reconstructs this metric change from the target latent $z_{i,k}^{+,r}$, while the deployment path predicts the same metric change from the deployment latent $z_{i,k}^{r}$:
\begin{align}
\TargetReconLoss
& =\mathbb E_{i,k,r}
\!\left[
\HuberLoss\!\left(
\widehat{\Delta\MetricVec}_{i,k}^{r},
\Delta\MetricVec_{i,k}^{r}
\right)
\right]\nonumber\\
&=
\mathbb E_{i,k,r}
\!\left[
\HuberLoss\!\left(
\widehat{\Delta\MetricVec}_{i,k}^{+, r},
\Delta\MetricVec_{i,k}^{r}
\right)
\right].\label{eq:target_reconstruction_loss}
\end{align}
where $\HuberLoss$ denotes a Huber loss.
Second, the alignment loss transfers the rollout-defined representation to the deployment encoder by matching the deployment latent to the stopped target latent:
\begin{equation}
\AlignLoss
=
\mathbb E_{i,k,r}
\!\left[
\left\|
\LatentNorm{z_{i,k}^{r}}
-
\operatorname{sg}\!\left(\LatentNorm{z_{i,k}^{+,r}}\right)
\right\|_2^2
\right],
\label{eq:alignment_loss}
\end{equation}
where $\operatorname{sg}(\cdot)$ denotes the stop-gradient operator and $\LatentNorm{\cdot}$ denotes the  latent-vector $\ell_2$ normalization operator.
Third, the trajectories whose endpoints are nearby are encouraged to have consistent target latents. 
Let $\ProposalEndpoint_{i,k}^{r}$ denote the endpoint position of trajectory $\Traj_{i,k}$, 
the coherent loss is 
\begin{equation}
\CoherenceLoss
=
\mathbb E_{i,k,k',r}
\!\left[
\exp\!\left(-\frac{\EndpointDist{k}{k'}}{\CoherenceTemp}\right)
\LatentDist{k}{k'}^2
\right],
\label{eq:coherence_loss}
\end{equation}
where
$
\EndpointDist{k}{k'}
=
\|\ProposalEndpoint_{i,k}^{r}-\ProposalEndpoint_{i,k'}^{r}\|_1$ and $
\LatentDist{k}{k'}
=
\|\LatentNorm{z_{i,k}^{+,r}}-\LatentNorm{z_{i,k'}^{+,r}}\|_2$.    

Finally, the JEPA training objective can be described: 
\begin{equation}
\JEPALoss
=
\TargetLoss
+
\CoherenceLoss
+
\AlignLoss.
\label{eq:jepa_world_model_objective}
\end{equation}
Further, the scalar trajectory score can be induced with $\ProposalScore_{i,k}^{r} = \MetricWeight^{\top} \widehat{\Delta\MetricVec}_{i,k}^{r}$, and an auxiliary ranking supervision $\RankLoss$ can be used for its training; details are given in Eq.~\eqref{eq:ranking_loss} of Appendix \ref{app:jepa_loss}.
  
\begin{algorithm}[t]
\caption{Inference in WONDER.}
\label{alg:inference_proposal_election}
\begin{algorithmic}[1]
\Require Local observations $\{\LocalObs_i\}_{i\in\UAVSet}$, communication graph $\GraphSet$, trained $\WorldModel$ and $
\ElectionModule$
\Ensure Joint plan $\JointPlan_t$
\State Initialize $\CommittedTrajSet{1}\gets\emptyset$, $\CommittedUAVSet_1\gets\emptyset$, and $\ActiveUAVSet_1\gets\UAVSet$
\For{$r=1,\ldots,\RoundLimit$}
    \If{$\ActiveUAVSet_r=\emptyset$}
        \State \textbf{break}
    \EndIf
    \For{each UAV $i\in\ActiveUAVSet_r$}
        \State Filter endpoints reachable within $\SyncPeriod$ steps and obtain non-overlapping feasible trajectories $\FeasProposalPool_{i,r}$
        \State Rank $\FeasProposalPool_{i,r}$ with $\WorldModel$ \eqref{eq:jepa_world_model} and retain the top $\NumCand$ local proposals
        \State Send trajectory message $\Msg_{i\to j}^{r}$ to each neighbor $j\in\UAVNeighbor_i$
    \EndFor
    \For{each subgraph $\SubgraphSet\in\GraphSet$}
        \State Collect neighbor messages $\{\Msg_{j\to i}^{r}\}_{j\in\UAVNeighbor_i}$
        \State Apply deterministic $\mathsf{Top}_{\NumCand}$ reduction to form $\SubgraphProposalPool_r$
    \EndFor
    \State Relay reduced messages across subgraphs for at most $\RelayDepth$ hops
    \State Form the global trajectory pool $\GlobalProposalPool_r$ following \eqref{eq:proposal_chain}
    \State $\ElectionChoice_r\gets\ElectionModule(\GlobalProposalPool_r,\CommittedTrajSet{r})$
    \If{$\ElectionChoice_r=\text{End Negotiation}$}
        \State \textbf{break}
    \EndIf
    \State Extract the committed trajectory $\CommittedTraj_r$ and selected UAV $\SelectedUAV$ from $\ElectionChoice_r$ and update $\CommittedTrajSet{r+1}\gets\CommittedTrajSet{r}\cup\{\CommittedTraj_r\}$, $\CommittedUAVSet_{r+1}\gets\CommittedUAVSet_r\cup\{\SelectedUAV\}$, $\ActiveUAVSet_{r+1}\gets\ActiveUAVSet_r\setminus\{\SelectedUAV\}$
\EndFor
\State Construct $\JointPlan_t$ from the committed trajectories
\State \Return $\JointPlan_t$
\end{algorithmic}
\end{algorithm}
\subsection{Training Process}
\label{subsec:training_process}

The training process decouples radio impact representation learning from election learning.
WONDER first pretrains the JEPA radio world model $\WorldModel$ and the trajectory scorer $\DeployPredictor$ with offline rollouts.
It then freezes $\WorldModel$ and trains the election module $\ElectionModule$ on $\GlobalProposalPool_r$ constructed from the
current context $\CommittedTrajSet{r}$. 
The election module $\ElectionModule$ selects one trajectory or ends the negotiation, with rewards defined by balanced-objective improvement and gateway-connectivity penalty. We leave the training details in Appendix~\ref{app:election_loss}.
Finally, WONDER couples the evolving radio world model with the Election Actor through alternating proposal refresh. 
Since $\WorldModel$ re-encodes and scores candidates when constructing $\GlobalProposalPool_r$, updating $\WorldParams$ changes the proposal pool seen by $\ElectionModule$. 
Therefore, PPO rollouts collected under the old proposal pool are discarded, 
and the next election update uses fresh rollouts $\FreshRollout$ generated by the updated world model $\WorldModel'$. 

In summary, we provide the inference and training pseudocodes in Algorithm~\ref{alg:inference_proposal_election} and Algorithm~\ref{alg:wonder_training}.

\begin{algorithm}[t]
\caption{Training process of WONDER.}
\label{alg:wonder_training}
\begin{algorithmic}[1]
\Require Offline rollouts, initialized $\WorldModel$, and $\ElectionModule$
\Ensure Trained $\WorldModel$ and $\ElectionModule$
\State \textbf{Stage 1: JEPA radio world model pretraining}
\State Use offline rollouts to compute target metric changes $\Delta\MetricVec_{i,k}^{r}$ in \eqref{eq:target_metric_change}
\State Train the target and deployment encoder and trajectory scorer to $\ProposalScore_{i,k}^{r}$ in \eqref{eq:jepa_world_model} for local, subgraph, and relay reductions with $\JEPALoss$ in \eqref{eq:jepa_world_model_objective}
\State \textbf{Stage 2: Election module training}
\State Freeze $\WorldModel$ 
\While{the election module has not converged}
    \State Run multi-round trajectory election using Algorithm~\ref{alg:inference_proposal_election}
    \State Construct $\GlobalProposalPool_r$ from the current context $\CommittedTrajSet{r}$
    \State Update $\ElectionModule$ with $\ElectionUpdateLoss$ in Eq.~\eqref{eq:app_election_objective}
\EndWhile
\State \textbf{Stage 3: Coupled refinement of world modeling and election}
\While{the refinement has not converged}
  \State Freeze $\ElectionModule$ and update $\WorldModel$ with $\JEPALoss$ in
  \eqref{eq:jepa_world_model_objective}
  \State Regenerate rollouts $\FreshRollout$ with updated $
  \WorldModel'$
  \State Freeze $\WorldModel$ and update $\ElectionModule$ with $
  \ElectionUpdateLoss$ in \eqref{eq:app_election_objective} on $\FreshRollout$
\EndWhile
\State \Return $\WorldModel,\ElectionModule$
\end{algorithmic}
\end{algorithm}

\section{Theoretical Analysis}
\label{sec:theoretical_analysis}

The following theorem connects Definition~\ref{def:counterfactual_composition_gap} with the multi-round negotiation in WONDER.
By comparing a sequential plan \(\SeqPlan\) with the STACCA-style parallel counterfactual plan \(\ParallelPlan\), mentioned in Section~\ref{sec:empirical_analysis}, we show that
sequential commitment yields a larger lower bound on \(\BalancedScore\) as the number of negotiation rounds $r$ increases.
The approximation result for WONDER then explains how the learned policy approaches this bound in the practical negotiation range.

We write the composition gap after round \(r\) as
\begin{align}
\CompositionGap_r
&\triangleq
\CompositionGap(\mathbb N^{\mathrm{CF}}_r;\BasePlan\oplus\CommittedTrajSet{r}).
\nonumber
\end{align}
Here \(\CommittedTrajSet{r}\) contains the trajectories committed before round $r$, and \(\mathbb N^{\mathrm{CF}}_r\) contains the UAVs that recompute counterfactual trajectories under this committed context.

\begin{assumption}[Positive round-wise composition gap]
\label{ass:pos_gap}
For the negotiation rounds considered in the analysis, $\CompositionGap_r \ge 0$.  
\end{assumption}

\noindent The curve of $\CompositionGap(\UAVSet;\BasePlan)$ in Fig.~\ref{fig:residual_curve} provides empirical evidence for this condition.

\begin{assumption}[Composition-gap recovery of the sequential plan]
\label{ass:round_gain}
Committing \(\CommittedTraj_r\) and recomputing the remaining counterfactual trajectories recover at least an
\(\eta_r\)-fraction of the current composition gap:
\begin{align}
\BalancedScore(\SeqPlan_{r+1})-\BalancedScore(\SeqPlan_r)
&\ge
\eta_r\CompositionGap_r,
\qquad 0<\eta_r\le1.
\nonumber
\end{align}
The curve of $\BalancedScore(\SeqPlan_{r}) - \BalancedScore(\ParallelPlan)$ in Fig.~\ref{fig:residual_curve} provides empirical evidence for this condition.
\end{assumption}

Under exact scoring among feasible trajectories $\FeasProposalPool_r$, 
Assumption ~\ref{ass:round_gain} provides a guarantee on the recursive gain of sequential commitment, as formalized below. 
\begin{theorem}[Recursive advantage over parallel counterfactual plan]
\label{thm:ideal}
Let
$D_r
\triangleq
\BalancedScore(\SeqPlan_r)-\BalancedScore(\ParallelPlan)$.
Under Assumption \ref{ass:round_gain},
$D_{r+1}
\ge
D_r+\eta_r\CompositionGap_r.$
Consequently, under Assumption \ref{ass:pos_gap}, if \(\CompositionGap_r\ge0\), the sequential gain over \(\ParallelPlan\) is nondecreasing from round \(r\) to
round \(r+1\). If \(\CompositionGap_r>0\), it increases strictly.
\end{theorem}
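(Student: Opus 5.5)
The argument is a direct telescoping manipulation of the definition of \(D_r\), so the plan is to unfold the definition and invoke Assumption~\ref{ass:round_gain} once. First, I will write the difference of consecutive sequential gains, noting that the reference term \(\BalancedScore(\ParallelPlan)\) is common to both and does not depend on the round:
\begin{align}
D_{r+1}-D_r
&=
\left[\BalancedScore(\SeqPlan_{r+1})-\BalancedScore(\ParallelPlan)\right]
-\left[\BalancedScore(\SeqPlan_r)-\BalancedScore(\ParallelPlan)\right]
\nonumber\\
&=
\BalancedScore(\SeqPlan_{r+1})-\BalancedScore(\SeqPlan_r).
\nonumber
\end{align}
Applying Assumption~\ref{ass:round_gain} to the right-hand side then gives \(D_{r+1}-D_r\ge\eta_r\CompositionGap_r\), which is the claimed recursion \(D_{r+1}\ge D_r+\eta_r\CompositionGap_r\).

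For the monotonicity consequences, I will use that \(0<\eta_r\le1\). If \(\CompositionGap_r\ge0\), which Assumption~\ref{ass:pos_gap} guarantees for the rounds under consideration, then \(\eta_r\CompositionGap_r\ge0\), so \(D_{r+1}\ge D_r\). If \(\CompositionGap_r>0\), then strict positivity of \(\eta_r\) makes the product strictly positive, giving \(D_{r+1}>D_r\).

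As an optional remark, I will unroll the recursion from \(D_0=0\), using the convention \(\SeqPlan_0=\ParallelPlan\). This yields \(D_r\ge\sum_{s<r}\eta_s\CompositionGap_s\ge0\), which is the cumulative lower bound alluded to before the theorem.

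There is no genuine obstacle here. The only point needing care is bookkeeping: confirming that \(\ParallelPlan\) is a fixed reference plan, so that it cancels between \(D_{r+1}\) and \(D_r\), and that the indexing of \(\CompositionGap_r\) matches the round at which Assumption~\ref{ass:round_gain} is applied.
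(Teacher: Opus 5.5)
Your proposal is correct and matches the paper's proof. The paper adds $\BalancedScore(\SeqPlan_r)-\BalancedScore(\ParallelPlan)$ to both sides of Assumption~\ref{ass:round_gain}, which is equivalent to your cancellation of the fixed reference term, and then uses $0<\eta_r\le1$ for the nonstrict and strict monotonicity exactly as you do; your optional unrolling from $D_0=0$ reproduces the paper's subsequent Remark.
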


\begin{proof}
By Assumption \ref{ass:round_gain},
\begin{align}
\BalancedScore(\SeqPlan_{r+1})-\BalancedScore(\SeqPlan_r)
&\ge
\eta_r\CompositionGap_r.
\nonumber
\end{align}
Adding \(\BalancedScore(\SeqPlan_r)-\BalancedScore(\ParallelPlan)\) to both sides yields
\begin{align}
\BalancedScore(\SeqPlan_{r+1})-\BalancedScore(\ParallelPlan)
&\ge
\BalancedScore(\SeqPlan_r)-\BalancedScore(\ParallelPlan)
+
\eta_r\CompositionGap_r.
\nonumber
\end{align}
Thus, \(D_{r+1}\ge D_r+\eta_r\CompositionGap_r\).  Since \(0<\eta_r\le1\),
\(\CompositionGap_r\ge0\) in Assumption \ref{ass:pos_gap} implies \(D_{r+1}\ge D_r\), and \(\CompositionGap_r>0\) implies \(D_{r+1}>D_r\).
\end{proof}

This guarantee provides a reference for the learned radio world model. 
We next assume sequential proposal effectiveness and bounded election regret, under which WONDER approaches the ideal recursive gain up to a regret \(\epsilon_r\). 

\begin{assumption}[Proposal effectiveness of the sequential choice]
\label{ass:proposal}
For commit rounds \(r\), WONDER's proposal pool contains the trajectory selected by the ideal sequential rule:
\begin{align}
\CommittedTraj_r
\in
\GlobalProposalPool_{r,\WorldModel}.
\nonumber
\end{align}
When WONDER selects from this pool under the same committed context, its chosen trajectory \(\CommittedTraj_{r,\WorldModel}\) incurs at most \(\epsilon_r\) selection regret:
\begin{align}
\CFGain(\CommittedTraj_r;\BasePlan\oplus\CommittedTrajSet{r})
&-
\CFGain(\CommittedTraj_{r,\WorldModel};\BasePlan\oplus\CommittedTrajSet{r})
\nonumber\\
&\le
\epsilon_r,
\qquad \epsilon_r\ge0.
\nonumber
\end{align}
\end{assumption}
\noindent The recall diagnostic in Fig.~\ref{fig:Stage1_Ensemble_Recall} provides empirical evidence for this condition.

We now define the sequential plan $\SeqPlan_{r,\WorldModel}$ induced by WONDER under the learned radio world model.
After $r$ committed rounds, its plan state and one-step update are
\begin{align}
\SeqPlan_{r,\WorldModel}
&=
\BasePlan\oplus\CommittedTrajSet{r},
\nonumber\\
\SeqPlan_{r+1,\WorldModel}
&=
\SeqPlan_{r,\WorldModel}\oplus\CommittedTraj_{r,\WorldModel}.
\nonumber
\end{align}
and define
$D_{r,\WorldModel}
\triangleq
\BalancedScore(\SeqPlan_{r,\WorldModel})-\BalancedScore(\ParallelPlan)$. We can have the following theorem
\begin{theorem}[Recursive gain of WONDER]

Under Assumptions \ref{ass:pos_gap}, \ref{ass:round_gain}, \ref{ass:proposal},
\begin{align}
D_{r+1,\WorldModel}
&\ge
D_{r,\WorldModel}
+
\eta_r\CompositionGap_r
-
\epsilon_r.
\nonumber
\end{align}
\end{theorem}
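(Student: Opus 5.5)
The plan is to compare WONDER's one-step update with the ideal sequential update from the same committed context, and then import the ideal recursion from Theorem~\ref{thm:ideal}. Write the one-round increment of WONDER's plan as
\begin{align}
D_{r+1,\WorldModel}-D_{r,\WorldModel}
&=
\BalancedScore(\SeqPlan_{r,\WorldModel}\oplus\CommittedTraj_{r,\WorldModel})
-\BalancedScore(\SeqPlan_{r,\WorldModel}),
\nonumber
\end{align}
since the $\BalancedScore(\ParallelPlan)$ terms cancel. Split the right-hand side by adding and subtracting $\BalancedScore(\SeqPlan_{r,\WorldModel}\oplus\CommittedTraj_r)$, the value obtained by committing the ideal sequential choice instead. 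This gives two pieces: an ideal increment, and a loss from choosing $\CommittedTraj_{r,\WorldModel}$ rather than $\CommittedTraj_r$.

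For the ideal increment, use that $\SeqPlan_{r,\WorldModel}=\BasePlan\oplus\CommittedTrajSet{r}$ is exactly the committed context under which Assumption~\ref{ass:proposal} evaluates both choices. Identify the ideal one-step improvement from this context with the quantity controlled by Assumption~\ref{ass:round_gain}, interpreting $\CompositionGap_r$ relative to the same committed set. This yields a lower bound of $\eta_r\CompositionGap_r$, which is the same step that gave Theorem~\ref{thm:ideal}.

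For the loss term, interpret the counterfactual gain $\CFGain(\cdot;\BasePlan\oplus\CommittedTrajSet{r})$ as the realized change in $\BalancedScore$ from appending that single trajectory to the context. The difference of the two $\BalancedScore$ values is then exactly the difference of $\CFGain$ values in Assumption~\ref{ass:proposal}. The pool-membership condition $\CommittedTraj_r\in\GlobalProposalPool_{r,\WorldModel}$ guarantees that the regret bound applies, so this difference is at most $\epsilon_r$. Combining the two pieces gives $D_{r+1,\WorldModel}\ge D_{r,\WorldModel}+\eta_r\CompositionGap_r-\epsilon_r$. Assumption~\ref{ass:pos_gap} is not needed for the inequality itself. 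It only ensures that the middle term is a nonnegative gain, so that WONDER's gain is nondecreasing whenever $\epsilon_r\le\eta_r\CompositionGap_r$.

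The main obstacle is the identification step. Assumption~\ref{ass:round_gain} is stated for the ideal trajectory $\SeqPlan_r$, which keeps the uncommitted counterfactual replacements. WONDER's $\SeqPlan_{r,\WorldModel}$ may carry different committed sets, because of earlier regret, and omits those replacements. My first approach would be to read Assumptions~\ref{ass:round_gain} and~\ref{ass:proposal} as conditioned on the current committed context. That is the reading indicated by the phrase ``under the same committed context'' in Assumption~\ref{ass:proposal}. Under it, the $\eta_r$-recovery bound holds for the ideal choice from $\SeqPlan_{r,\WorldModel}$, and $\CFGain$ equals the single-step change in $\BalancedScore$. I would state this identification explicitly as the convention under which the telescoping argument goes through.
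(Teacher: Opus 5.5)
Your proposal is correct and is essentially the paper's proof. The paper likewise bounds the ideal sequential step from $\BasePlan\oplus\CommittedTrajSet{r}$ below by $\eta_r\CompositionGap_r$ (citing Theorem~\ref{thm:ideal}), reads $\CFGain$ as the one-step $\BalancedScore$ increment from that context, and subtracts the regret $\epsilon_r$ given by Assumption~\ref{ass:proposal}. The identification you flag as the main obstacle (applying Assumption~\ref{ass:round_gain}, which is stated for $\SeqPlan_r$, to the step from $\SeqPlan_{r,\WorldModel}$) is also made implicitly in the paper's proof, so stating it explicitly as a convention is more careful than the original.
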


\begin{proof}
Theorem~\ref {thm:ideal} states that the ideal sequential choice recovers at least \(\eta_r\CompositionGap_r\).  
By Assumption~\ref{ass:proposal}, this trajectory is available in \(\GlobalProposalPool_{r,\WorldModel}\).  
Because \(\SeqPlan_{r,\WorldModel}=\BasePlan\oplus\CommittedTrajSet{r}\), 
Definition~\ref{def:counterfactual_composition_gap} makes \(\CFGain\) the one-step \(\BalancedScore\) increment in the WONDER comparison.  
Assumption~\ref{ass:proposal} further implies that the WONDER step loses at most \(\epsilon_r\) relative to the ideal step.  
Hence, the WONDER step recovers at least \(\eta_r\CompositionGap_r-\epsilon_r\).  
Adding this recovered amount to \(D_{r,\WorldModel}\) gives the recursion. 
\end{proof}

\begin{remark}
By iterating the round-wise recursion and using \(\SeqPlan_0=\ParallelPlan\), the ideal sequential plan satisfies
\begin{align}
\BalancedScore(\SeqPlan_r)-\BalancedScore(\ParallelPlan)
&\ge
\sum_{t=0}^{r-1}\eta_t\CompositionGap_t.
\nonumber
\end{align}
Thus, when the recursive composition gaps are nonnegative, sequential negotiation gives an advantage over the parallel counterfactual plan. Under assumptions on proposal
effectiveness and bounded selection regret, WONDER satisfies the corresponding regret-adjusted bound,
\begin{align}
\BalancedScore(\SeqPlan_{r,\WorldModel})-\BalancedScore(\ParallelPlan)
&\ge
\sum_{t=0}^{r-1}
\left(\eta_t\CompositionGap_t-\epsilon_t\right).
\nonumber
\end{align}
Therefore, WONDER approaches the sequential-plan advantage up to the cumulative selection regret \(\sum_{t=0}^{r-1}\epsilon_t\) in the negotiation rounds where the radio world model is highly confident and the proposal coverage is empirically verified.
\end{remark}

\section{Experiment Settings and Results}
\label{sec:experiments}

\subsection{Experimental Setup}

\begin{figure*}[t]
\centering
\includegraphics[width=0.98\textwidth]{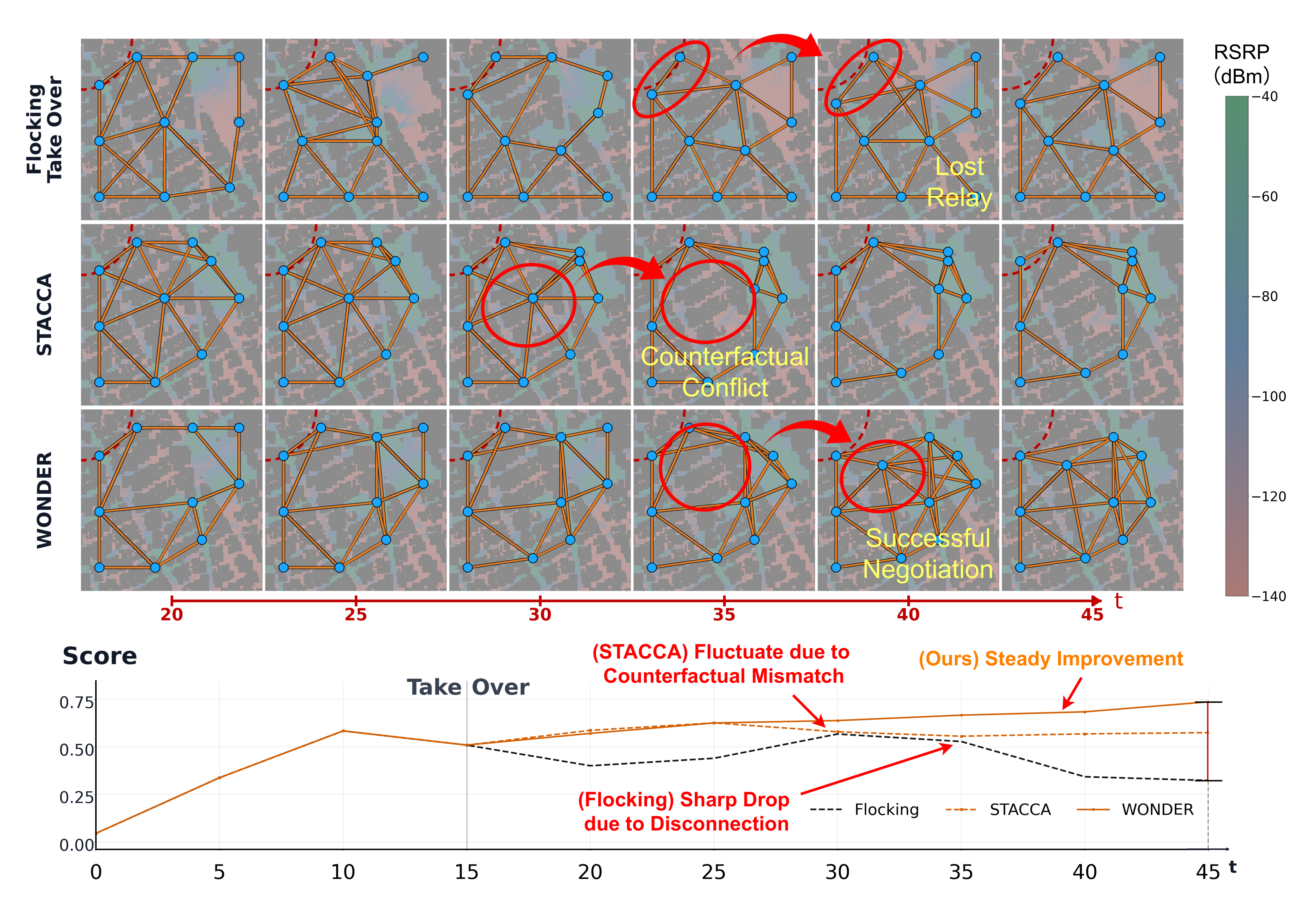}
\vspace{-15pt}
\caption{Visualization of RadioDynamics closed-loop rollouts.
Panels are recorded every five timesteps; the \textcolor{red}{red} time axis aligns the panels with rollout time, and the bottom curve reports the corresponding balanced score.
The first to third rows show rollouts controlled by Flocking, STACCA, and WONDER, respectively.}
\label{fig:evolution}
\vspace{-12pt}
\end{figure*}

\begin{table}[!t]
\centering
\caption{Simulation and benchmark parameters.}
\label{tab:params}
\setlength{\tabcolsep}{2.0pt}
\renewcommand{\arraystretch}{1.05}
\newcommand{\topdesc}[1]{%
  \multicolumn{1}{@{}>{\raggedright\arraybackslash}p{0.38\columnwidth}}{#1}}
\newcommand{\topnote}[1]{%
  \multicolumn{1}{>{\centering\arraybackslash}p{0.25\columnwidth}}{#1}}
\begin{tabularx}{\columnwidth}{@{}>{\raggedright\arraybackslash}m{0.4\columnwidth}>{\centering\arraybackslash}m{0.25\columnwidth}>{\raggedright\arraybackslash}X@{}}
\toprule
Description & Notation & Value \\
\midrule
Ground service area & \(\diagup\) & \(700\times700\) m\(^2\) \\
Ground grid & \(\diagup\) & \(350\times350\) \\
Cell resolution & \(\diagup\) & \(2\) m \\
A2G bandwidth & \(\Bandwidth\) & \(100\) MHz~\cite{mosoCanopy5G} \\
A2A carrier frequency & \(f_c\) & \(3.5\) GHz~\cite{etsi3gpp38901} \\
A2A transmit power & \(P_i^{\mathrm{A2A}}\) & \(23\) dBm \\
A2A antenna gain & \(G_i,G_j\) & \(0\) dBi omnidirectional \\
LoS indicator & \(\ell_{ij,t}\) & LoS test over \(\EnvMap\) \\
Effective UAV antenna height & \(h_{\mathrm{eff}}\) & \(70\) m  ~\cite{etsi3gpp38901} \\
Receiver height & \(\diagup\) & \(1.5\) m~\cite{etsi3gpp38901} \\
A2G carrier frequency & \(\diagup\) & \(3.5\) GHz~\cite{mosoCanopy5G} \\
A2G RS EIRP & \(\RefSignalPower{i}\) & \(25\) dBm \\
\topdesc{A2G antenna\newline convention} & \topnote{\(\diagup\)} & Omnidirectional,\newline gain absorbed into EIRP \\
A2A nominal bandwidth & \(\diagup\) & \(20\) MHz~\cite{doodleLabsMeshRider} \\
A2A link threshold & \(\LinkThresh\) & \(-90\) dBm \\
\topdesc{Backhaul direct-access\newline threshold} & \topnote{\(\BackhaulThresh\)} & Calibrated to \(150\) m\newline horizontal radius \\
Backhaul gateway mode & \(\diagup\) & Master/remote/relay \\
\topdesc{Material reflection \newline coefficient Assigned\newline via height} & \topnote{\(\diagup\)} & Concrete: \(0.6\), \(\leq15\) m\newline Glass: \(0.4\), \(15\)--\(40\) m\newline Metal: \(0.85\),  \(>40\) m \\
Synchronization interval & \(\SyncPeriod\) & \(5\) steps \\
Number of UAVs & \(\NumUAV\) & \(10\) \\
Candidate trajectories & \(|\FeasProposalPool|\) & \(61\) \\
Local Observation dimension & \(\dim(\LocalObs_i)\) & \(224\) \\
Flocking and Learned steps & \(\WarmupSteps, \LearnSteps\) & \(15, 30\) steps \\
Coverage threshold & \(\CoverageThresh\) & \(-90\) dBm~\cite{3gpp38215} \\
SINR threshold & \(\SINRThresh\) & \(0\) dB \\
\topdesc{Balanced-score\newline weights} & \topnote{\(\MetricWeight\)} & \(0.25,0.10,0.20,0.20,\)\newline \(0.10,0.10,-0.05\) \\
Flocking scoring weights & \(\begin{gathered}\DispatchSepWeight,\DispatchCrowdWeight,\DispatchDegradeWeight\end{gathered}\) & \((0.25,0.6,0.3)\) \\
\bottomrule
\end{tabularx}
\end{table}

\begin{table}[t]
\centering
\caption{Training parameters.}
\label{tab:training_params}
\setlength{\tabcolsep}{2.0pt}
\renewcommand{\arraystretch}{1.05}
\begin{tabularx}{\columnwidth}{@{}>{\raggedright\arraybackslash}m{0.45\columnwidth}>{\centering\arraybackslash}m{0.2\columnwidth}>{\raggedright\arraybackslash}X@{}}
\toprule
Description & Notation & Value \\
\midrule
Candidate proposal per reduction & \(\NumCand\) & \(3\) \\
Relay-depth bound & \(\RelayDepth\) & \(2\) hops \\
Negotiation-round limit & \(\RoundLimit\) & \(3\) \\
JEPA continuity constants & \(\CoherenceTemp,\LatentEps\) & \((0.25,10^{-8})\) \\
Connectivity penalty & \(\ConnPenalty\) & \(1.0\) \\
PPO constants & \(\begin{gathered}\PPOClipEps,\PPOValueCoef,\PPOEntropyCoef\end{gathered}\) & \((0.2,0.5,0.01)\) \\
\bottomrule
\end{tabularx}
\vspace{-8pt}
\end{table}

\begin{figure*}[t]
\centering
\includegraphics[width=0.98\textwidth]{figure/main_results.pdf}
\vspace{-8pt}
\caption{Closed-loop performance under the shared RadioDynamics evaluation.
Fair baselines are shown in \textcolor{blue}{blue}, WONDER in \textcolor{red}{red}, and Exhausted references in \textcolor{gray}{gray}.
The balanced score summarizes the normalized metric vector.}
\vspace{-8pt}
\label{fig:overall}
\end{figure*}

We adopt $62$ metropolitan scenes, such as Hong Kong, New York, and Tokyo,
for RadioDynamics construction, as illustrated in Fig. \ref{fig:real}.
The digital-twin pipeline first converts OpenStreetMap geometry into mesh-based city models and exports the scene assets to Sionna RT.
Sionna RT first assigns empirically selected concrete, glass,
and metal materials to the city models based on building height,
and then precomputes the ground-level RSRP field for each transmitter setting.
Each scene stores these transmitter settings together with the resulting dense RSRP fields.
The configuration is summarized in Table~\ref{tab:params}, while the training-related parameters are given in Table~\ref{tab:training_params}.

The $62$ scenes are further split into $42$ training scenes, $9$ validation scenes, and $11$ test scenes, with $4$ entries per scene.
Each method is trained with seed $0$ and evaluated with seeds \(\{0,\ldots,5\}\).
For testing, all methods first use Flocking for $\WarmupSteps =15$ steps to spread the swarm,
and the evaluated algorithm then runs for $\LearnSteps = 30$ steps from the same warm-up state.
At each step, the scene-aware A2A graph is updated from the UAV positions,
digital-twin line of sight,
and the link-budget feasibility rule in Section~\ref{sec:system_model}.
All experiments use the metrics defined in Section~\ref{sec:system_model},
namely RSRP coverage, 5th-percentile delivered rate, SINR coverage,
mean delivered rate, Jain fairness, backhaul connectivity, and mean interference.
The balanced score uses the weights listed in Table~\ref{tab:params}.

The deployment uses separate radios for A2G service broadcast,
A2A swarm relay,
and ground backhaul.
A temporary backhaul point is placed at the terminal gateway,
where a representative digital modem serves as the gateway and the same modem on UAVs serves as the remote or relay node. 
%
Each UAV carries an A2G service-radio payload for producing the RSRP footprint,
represented by a Canopy 5GID1 n78 small-cell radio envelope for the \(3.5\)-GHz,
\(100\)-MHz,
\(25\)-dBm-EIRP setting~\cite{mosoCanopy5G}.
Inter-UAV proposal exchange is carried by an A2A mesh radio,
represented by Mesh Rider Mini,
with SL5200 as a higher-end MANET alternative~\cite{doodleLabsMeshRider}.
A UAV beyond the \(150\)-m direct-root radius reaches the gateway through an A2A relay path to a gateway-attached UAV.

We select representative MARL and communication-bottleneck methods as the primary baselines.
MAPPO~\cite{yu2022surprising},
STACCA~\cite{sinha2025stacca},
and heuristic-based Flocking \cite{atincc2020swarm, wu2021multi, capelli2020connectivity, lin2021online} use the shared static scene prior and deployment-visible local observations without multi-hop communication.
MAGI~\cite{ding2024magi} and ours use the deployment-feasible communication interface,
including one-hop communication,
topology, and selected trajectories.
The ground-truth Reference Signal Received Power (RSRP) is not visible to deployable methods.
We also evaluate an Exhausted method under an oracle view with complete swarm and evaluator-side RSRP information.
Both variants are evaluated from the same $6$ seed-induced initial states: 
the Parallel variant lets all agents choose synchronously,
while its Sequential variant lets them choose one after another in a fixed order.

\subsection{Experiemental Results}
\subsubsection{Performance Superiority}
Fig.~\ref{fig:evolution} first illustrates a representative scene-level rollout,
linking the UAV motion,
RSRP field, A2A relay topology, and the bottom Score curve.
During the flock warm-up stage as in Fig. \ref{fig:evolution_start},
the UAVs spread from the initial cluster over the first 15 steps and form a broader relay structure over the city scene.
Afterward, the three rollout behaviors emerge:
Flocking keeps dispersing but encounters disconnection at $t=35$,
while STACCA fluctuates due to counterfactual mismatch.
WONDER instead achieves steady improvement,
adjusts the swarm layout while preserving relay links,
and continues increasing the score until the final step.
In this scene,
the WONDER trajectory ends with a $0.413$ score advantage over Flocking and $0.162$ over STACCA. 
We further evaluate all $7$ methods under the same closed-loop protocol and report averaged results in Fig.~\ref{fig:overall}.
Fig.~\ref{fig:overall} shows that WONDER attains the highest balanced score. 
The gain over heuristic-based Flocking comes from limiting excessive swarm dispersion, while the gain over STACCA comes from resolving conflicts among locally selected counterfactual decisions. 
Compared with Sequential Exhausted, WONDER further benefits from its negotiated decision order, which avoids the local optima caused by fixed-order decisions in building-dense scenarios, 
which is demonstrated in Fig.~\ref{fig:advantage_over_exhausted}. 

\begin{figure}[!tb]
\centering
\includegraphics[width=0.98\linewidth]{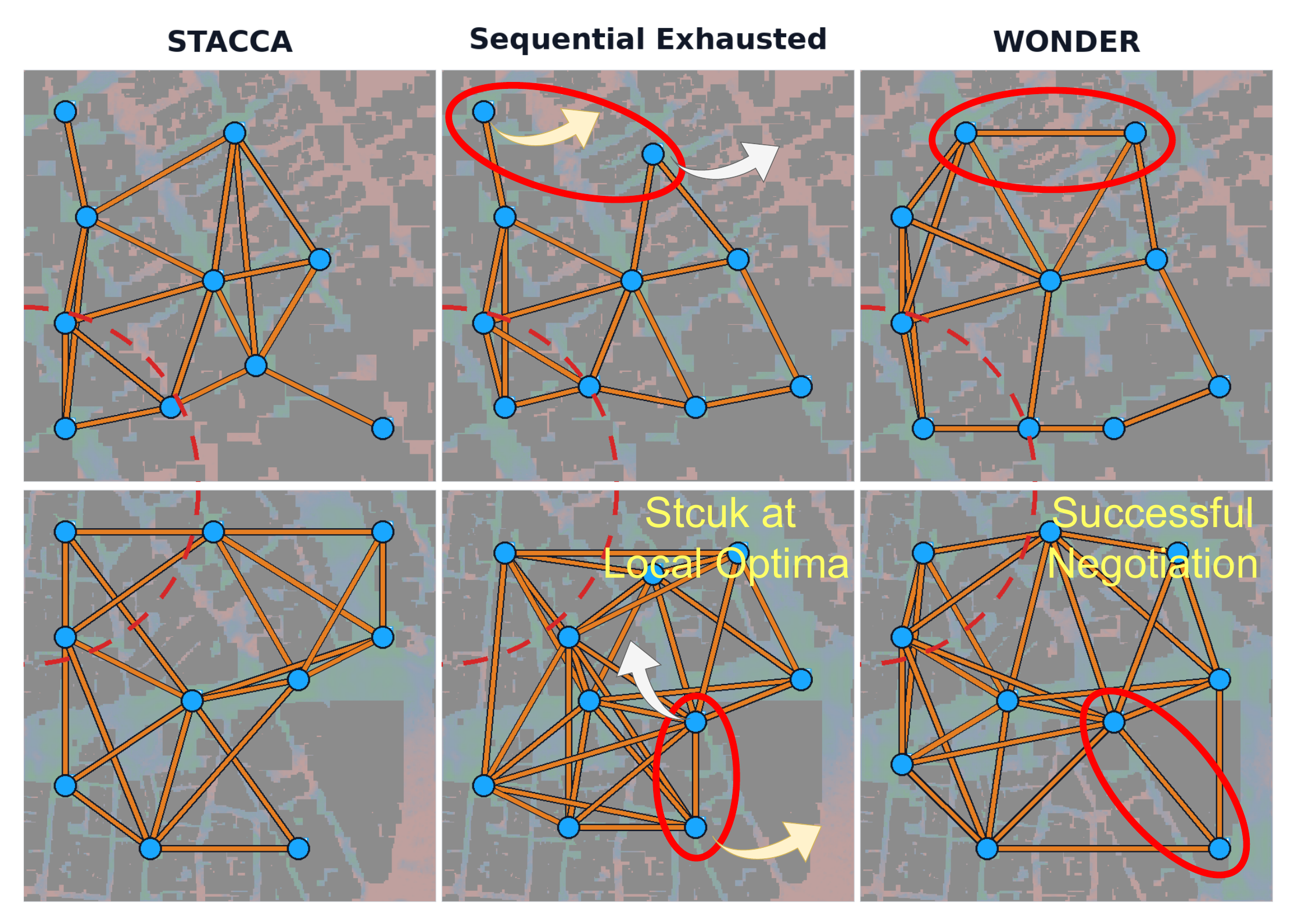}
\vspace{-12pt}
\caption{Representative cases where our method outperforms Sequential Exhausted.
Sequential Exhausted visits UAVs in a fixed order and greedily commits each visited UAV's locally best action. 
The \textcolor{gray}{gray} actions yield marginal gain, 
while the \textcolor{yellow!70!black}{yellow} actions would produce higher global gain if committed earlier. 
Our method prioritizes the \textcolor{yellow!70!black}{yellow} actions through proposal-level negotiation, avoiding the local optimum induced by fixed-order greedy selection.}
\label{fig:advantage_over_exhausted}
\vspace{-8pt}
\end{figure}

\subsubsection{Feasibility Study}

As discussed in Appendix \ref{app:comm}, Table~\ref{tab:comm} assesses whether the proposed communication mechanism can be carried by representative radios, by counting the payloads of local-vector exchange,
WONDER proposal aggregation,
and raw full-map transfer under the same topology.
Compared to STACCA, WONDER adds \(147.456\) kbit of proposal-aggregation payload per synchronization round,
resulting in a total of \(270.336\) kbit.
Using the \(80\) Mbps A2A entry \cite{doodleLabsMeshRider} as a nominal reference,
a \(1\)-s deployment step provides \(400\) Mbit over the \(5\)-step synchronization round.
The WONDER total therefore uses only \(0.0676\%\) of this nominal round capacity,
indicating that the communication load is well within the bandwidth of the listed devices while yielding superior coverage performance.

\begin{table}[t]
\caption{Payload upper bounds for message exchange.}
\label{tab:comm}
\centering
\setlength{\tabcolsep}{1.5pt}
\renewcommand{\arraystretch}{1.06}
\begin{tabular}{@{}>{\raggedright\arraybackslash}m{0.24\linewidth}>{\raggedright\arraybackslash}m{0.27\linewidth}>{\centering\arraybackslash}m{0.18\linewidth}>{\centering\arraybackslash}m{0.20\linewidth}@{}}
\toprule
Entry & Scope & No. of Messages & \shortstack[c]{Payload\\(kbit)} \\
\midrule
STACCA~\cite{sinha2025stacca} & Local vectors & \(120\) & \(122.88\) \\
MAGI~\cite{ding2024magi} & Neighborhood Message & \(120\) & \(245.76\) \\
WONDER & Local + proposal & \(264\) & \(270.336\) \\
Full Connection & Full Radio Map (raw) & \(120\) & \(470400\) \\
\bottomrule
\end{tabular}
\vspace{-8pt}
\end{table}

\subsubsection{Ablation Analysis}
Fig.~\ref{fig:Stage1_Training} further confirms that the JEPA Radio World Model and the Direct Ranking (DR) variant,
which uses the same backbone network and removes \(\AlignLoss\),
both converge during local proposal training.
Fig.~\ref{fig:Stage1_Ensemble_Recall} then evaluates
%
%
whether the union of locally retained Top-$\NumCand$ candidates contains the globally counterfactual best action over the \(3\) negotiation rounds.
At $\NumCand = 3$,
JEPA achieves \(99.62\%\) recall,
outperforming DR by \(4.92\%\),
%
demonstrating that local Top-$3$ retention is sufficient for reliable subsequent election.

\begin{figure}[tb]
\centering
\vspace{-4pt}
\includegraphics[width=0.98\linewidth]{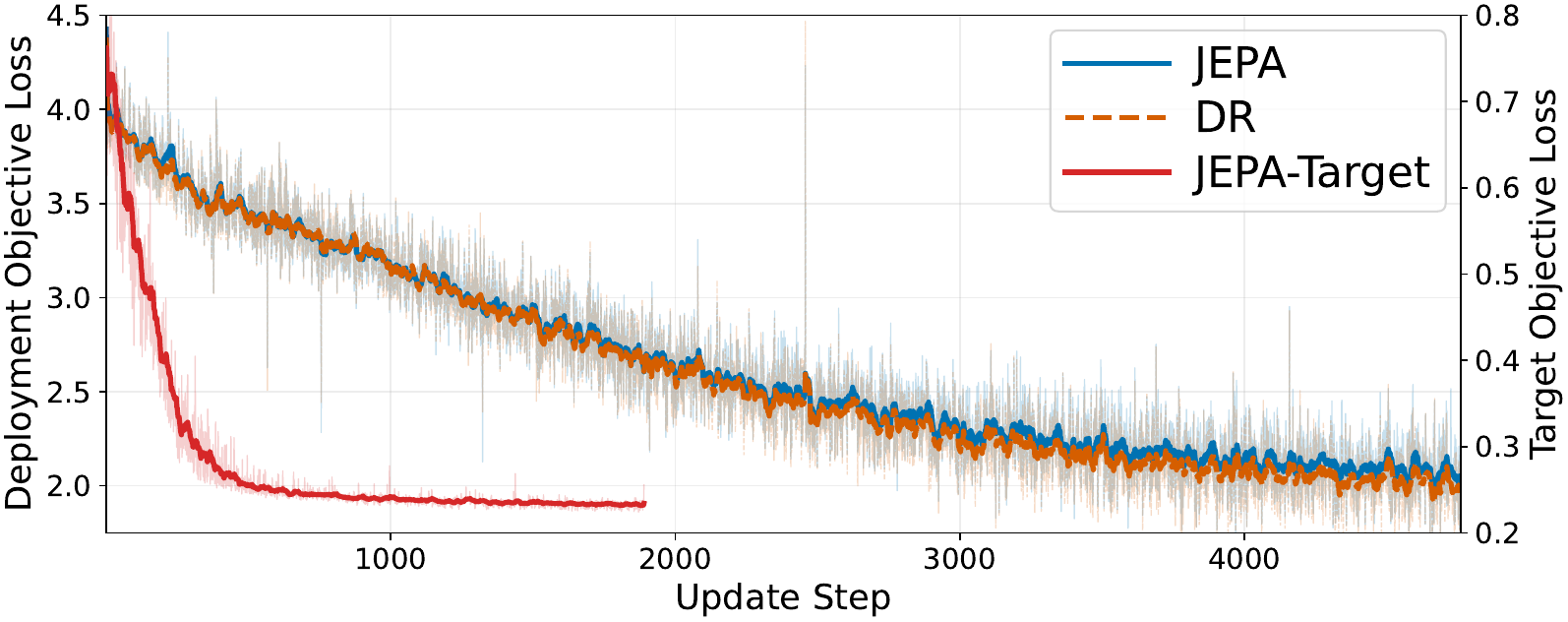}
\vspace{-6pt}
\caption{Training curves for JEPA and DR local proposal models.}
\vspace{-15pt}
\label{fig:Stage1_Training}
\end{figure}

\begin{figure}[tb]
\centering
\vspace{-2pt}
\includegraphics[width=0.92\linewidth]{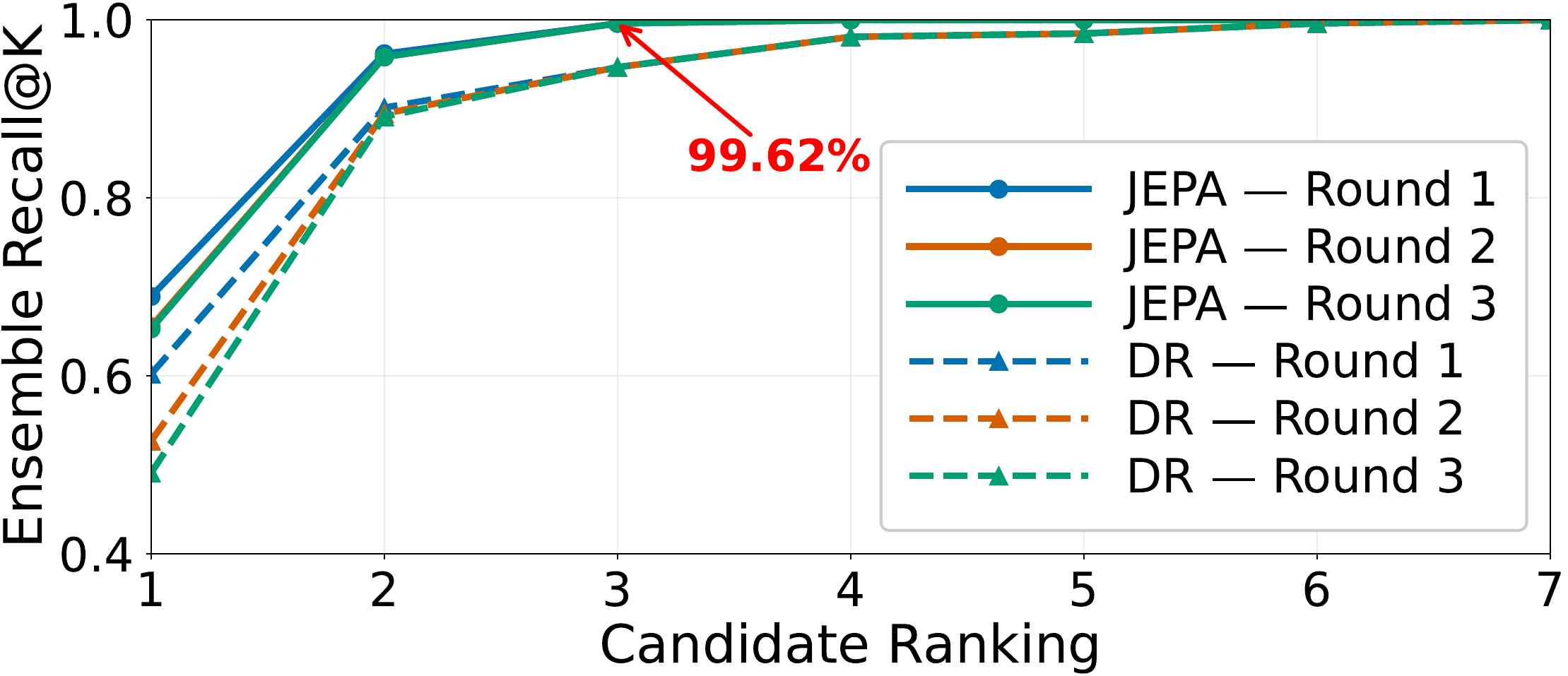}
\vspace{-6pt}
\caption{Offline recall of retained proposals for JEPA and DR.}
\label{fig:Stage1_Ensemble_Recall}
\vspace{-8pt}
\end{figure}

\begin{figure}[tb]
\centering
\includegraphics[width=0.98\linewidth]{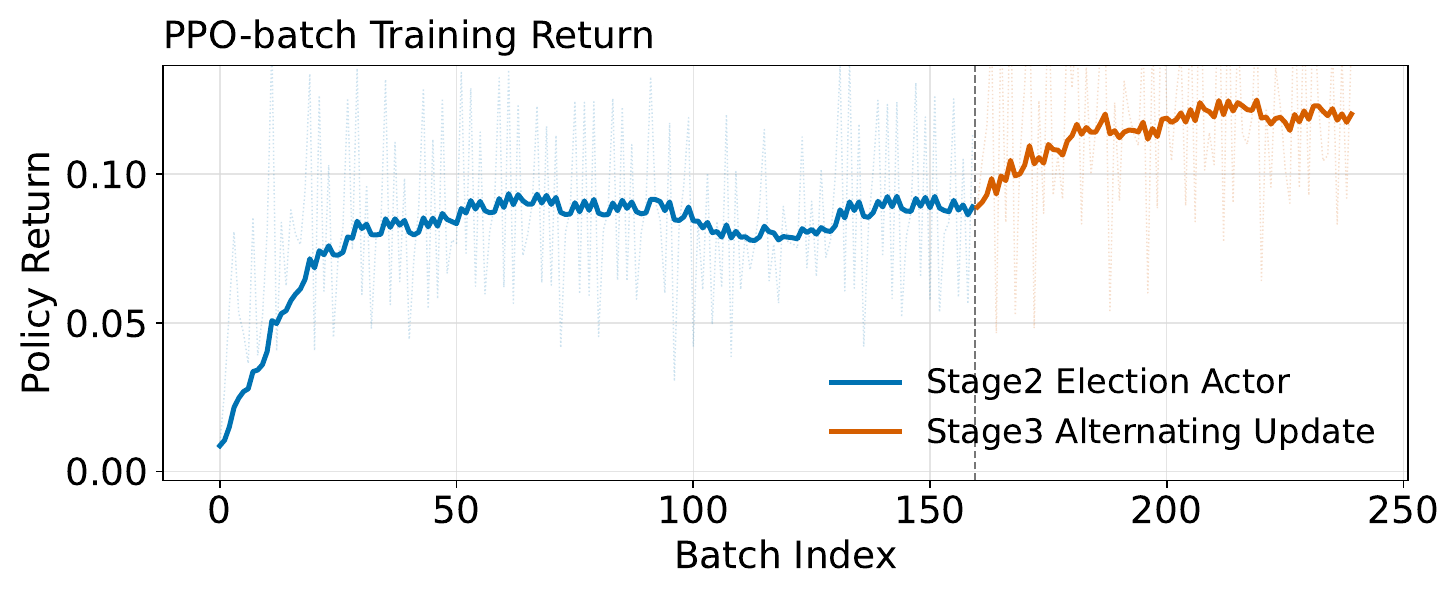}
\vspace{-2pt}
\caption{Training curve of the election policy.}
\label{fig:Stage23_Curve}
\vspace{-15pt}
\end{figure}

\begin{table*}[tb]
\centering
\caption{Ablation analysis of the main WONDER components under the shared evaluation setting.}
\label{tab:ablation_results}
\scriptsize
\setlength{\tabcolsep}{2.2pt}
\renewcommand{\arraystretch}{0.88}
\resizebox{\textwidth}{!}{%
\begin{tabular}{@{}*{10}{>{\centering\arraybackslash}c}@{}}
\toprule
Method
& \shortstack{Coverage\\Area (\%)}
& \shortstack{Mean Rate\\(Mbps)}
& \shortstack{5\% Edge\\Rate (Mbps)}
& \shortstack{SINR\\Coverage (\%)}
& \shortstack{Mean\\Interference (\%)}
& \shortstack{Jain\\Fairness (\%)}
& \shortstack{Connectivity\\(\%)}
& \shortstack{Balanced\\Score} \\
\midrule
w/o JEPA Encoder & 43.1 & 300.8 & 0.00357 & 38.5 & 0.98 & 29.8 & 66.5 & 0.659 \\
w/o Election Module & 44.5 & 322.2 & 0.00381 & 41.0 & 0.94 & 31.7 & 68.4 & 0.708 \\
\underline{w/o Alternating update} & \underline{49.4} & \underline{353.2} & \underline{0.00420} & \underline{45.0} & \underline{0.85} & \underline{35.0} & \underline{75.7} & \underline{0.808} \\
\textit{w/o Multi-Round Negotiation} & \textit{48.3} & \textit{343.3} & \textit{0.00408} & \textit{44.5} & \textit{0.87} & \textit{34.1} & \textit{74.0} & \textit{0.784} \\
\textbf{Our full method} & \textbf{52.2} & \textbf{362.5} & \textbf{0.00400} & \textbf{47.5} & \textbf{0.89} & \textbf{33.4} & \textbf{100.0} & \textbf{0.870} \\
\bottomrule
\end{tabular}%
}
\vspace{-6pt}
\end{table*}

Table~\ref{tab:ablation_results} reports the main ablation results for WONDER.
It compares the full method with four variants: replacing JEPA with DR,
replacing the Election Module with deterministic reduction,
applying JEPA encoding only in the first negotiation round without subsequent representation updates,
and removing the alternating update procedure.
The full method achieves the highest balanced score,
\(0.870\),
which is \(7.67\%\) higher than the strongest ablation,
the variant without alternating updates,
%
%
and is the only variant that reaches \(100\%\) connectivity.
Across the ablations,
removing JEPA,
learned election,
alternating optimization,
or multi-round negotiation
reduces the balanced score to \(0.659\), \(0.708\), \(0.808\), and \(0.784\), respectively,
indicating that each component is associated with a measurable loss in closed-loop performance when omitted or replaced.
Fig.~\ref{fig:Stage23_Curve} provides complementary evidence through the later-stage training curve of the election policy.
The trend after alternating refinement is consistent with the table-level gain of the full method,
suggesting that alternating optimization helps WONDER balance coverage,
rate,
connectivity,
and interference in the evaluated closed-loop setting.

\section{Conclusion}
\label{sec:conclusion}

This paper has studied deployable multi-UAV coverage restoration under the joint difficulty of hidden radio consequences and sparse inter-UAV communication.
To address the resulting radio observations and local-global coordination asymmetries,
WONDER has introduced a JEPA-based radio world model to predict candidate trajectories' incremental radio consequences under the evolving negotiation context,
while multi-round negotiation benefits the PPO-based election actor to generate superior policy.
The empirical analysis identifies the counterfactual composition gap caused by the synchronized execution of individually evaluated proposals, 
and the theoretical analysis then shows that recursively sequential commitment admits an advantage over the parallel counterfactual plan, 
which applies to WONDER up to bounded election regret when proposal coverage holds. 

Moreover, we have built RadioDynamics, a comprehensive simulation environment that integrates UAV mobility,
radio propagation, inter-UAV communication modeling,
and digital-twin geometry with ray-traced fields in $62$ metropolitan scenes.
On $11$ testing scenes of RadioDynamics, WONDER has achieved the highest balanced score among the evaluated methods ($0.870$) and reached $100\%$ gateway connectivity, validating its superiority.
\appendices
\section{Communication Payload Calculation}
\label{app:comm}
\subsection{Communication Accounting for Local Vectors and WONDER Proposals}

We derive the payload upper bounds summarized in Tab.~\ref{tab:comm} by counting the transmitted entries in each local vector and each WONDER message. 
We use a representative audited topology that partitions the swarm into subgraphs of sizes
\(3\), \(4\), and \(3\), which gives \(\sum_{q\in\{3,4,3\}}q(q-1)=24\) directed transmissions per local exchange.
A synchronization round therefore has the shared local-aggregation factor \(5\times24\) for
methods that aggregate over the local topology.
STACCA and WONDER local aggregation are counted with the \(32\)-float component size in
Table~\ref{tab:params}, MAGI uses \(64\)-float messages, and Full Connection transfers a \(350\times350\) RSRP map.
WONDER message aggregation is counted separately with \(\RoundLimit=3\), \(\NumCand=3\), and \(\RelayDepth=2\) from Table~\ref{tab:training_params}, giving up to
\(3\times[(10\times3)+(3\times3\times2)]\times32\) float32 values.
Thus, STACCA, MAGI, WONDER, and Full Connection contain \(5\times24\times32\), \(5\times24\times64\),
\(3\times[(10\times3)+(3\times3\times2)]\times32\), and \(5\times24\times(350\times350)\) float32 values,
respectively. Multiplying these counts by \(32\) bits gives the table payloads.
Full Connection requires \(1.74\times10^3\) times the WONDER total payload.

\section{Detailed Loss Function}
\label{app:loss}

\subsection{JEPA Radio World Model Objectives}
\label{app:jepa_loss}

%
We use the rollout-induced target score $\ProposalScore_{i,k}^{r}$ defined in \eqref{eq:target_metric_change}.
Let $k^*$ be denote the best trajectory index for UAV $i$ in round $r$, 
the auxiliary ranking loss is
\begin{equation}
\RankLoss
=
-\mathbb E_{i,r}
\left[
\log \frac{
\exp\!\left(\ProposalScore_{i,k^*}^{r}/\sigma_{\mathrm{score}}\right)
}{
\sum_{k':\Traj_{i,k'}\in\FeasProposalPool_r}
\exp\!\left(\ProposalScore_{i,k'}^{r}/\sigma_{\mathrm{score}}\right)
}
\right].\label{eq:ranking_loss}
\end{equation}
Here, the expectation is over active UAV-round pairs, and $\sigma_{\mathrm{score}}>0$ is the fixed score-normalization constant
used for the trajectory-score head.

\subsection{Election Module Objective}
\label{app:election_loss}

This subsection gives the detailed objective used to train the election module $\ElectionModule$ after the JEPA Radio World Model
is fixed.
At synchronization step $t$ and election round $r$, the deterministic reductions produce the global trajectory pool
$\GlobalProposalPool_r$, and $\ElectionModule$ selects either one trajectory from this pool or ends the negotiation.
Let $\JointPlan_t^{r}$ denote the partial joint plan before round $r$.
If a trajectory is selected, let $\CommittedTraj_r$ be the trajectory contained in that trajectory; the post-election plan is
$\JointPlan_t^{r+1}=\JointPlan_t^{r}\oplus\CommittedTraj_r$.
The reward is defined by balanced-objective improvement with a penalty for gateway-connectivity violation:
\begin{equation}
\begin{aligned}
\ElectionReward
&=
\MetricWeight^{\top}
\!\left[
\MetricVec(\SwarmState_{t+\SyncPeriod};e)\Big|_{\JointPlan_t^{r+1}}
-
\MetricVec(\SwarmState_{t+\SyncPeriod};e)\Big|_{\JointPlan_t^{r}}
\right]
\nonumber\\
&\quad -
\ConnPenalty
\!\left[
1-\Connectivity(\SwarmState_{t+\SyncPeriod};e)\Big|_{\JointPlan_t^{r+1}}
\right].
\end{aligned}
\label{eq:app_election_reward}
\end{equation}
where the restriction $\big|_{\JointPlan_t^{r}}$ means that the terminal metric vector is evaluated after executing $\JointPlan_t^{r}$ over the synchronization horizon. The connectivity term penalizes violation of the hard gateway-connectivity requirement in \eqref{eq:constrained_deployment_problem}, with weight $\ConnPenalty$.
If $\ElectionModule$ selects to end negotiation, the transition reward is set to zero.
The election module is optimized with the PPO objective
\begin{equation}
\ElectionUpdateLoss
=
\PPOClipLoss
+
\PPOValueCoef\PPOValueLoss
-
\PPOEntropyCoef\PPOEntropyBonus .
\label{eq:app_election_objective}
\end{equation}
Here, $\PPOValueCoef$ and $\PPOEntropyCoef$ weight the value-loss and entropy terms, respectively.
The clipped policy loss is
\begin{equation}
\begin{aligned}
\PPOClipLoss
&=
-\mathbb E_{r}
\!\left[
\min\!\left(
\begin{aligned}
&\PPOProbRatio_{r}(\ElectionParams)\PPOAdvantage_{r},\\
&\operatorname{clip}\!\left(
\PPOProbRatio_{r}(\ElectionParams),
1-\PPOClipEps,
1+\PPOClipEps
\right)
\PPOAdvantage_{r}
\end{aligned}
\right)
\right],
\end{aligned}
\label{eq:app_ppo_clip_loss}
\end{equation}
where $\PPOProbRatio_{r}(\ElectionParams)$ is the PPO probability ratio between the current election policy and the behavior policy
for the selected election action at round $r$, and $\PPOAdvantage_{r}$ is the corresponding advantage estimate under
\eqref{eq:app_election_reward}. The clipping radius is $\PPOClipEps$.
The value loss trains the centralized critic with 
$
\PPOValueLoss
=
\mathbb E_{r}
\!\left[
\HuberLoss\!\left(
\PPOValue_{r},
\PPOReturn_{r}
\right)
\right],
$
%
where $\PPOValue_{r}$ is the centralized critic estimate and $\PPOReturn_{r}$ is the rollout return.
The entropy term $\PPOEntropyBonus$ is computed over the election action set formed by $\GlobalProposalPool_r$.

\bibliographystyle{IEEEtran}
\bibliography{reference}
\end{document}